\newtheorem{theorem}{Theorem}[section]
\newtheorem{corollary}[theorem]{Corollary}
\newtheorem{definition}[theorem]{Definition}
\newtheorem{lemma}[theorem]{Lemma}
\theoremstyle{remark}
\DeclareMathOperator{\supp}{supp}
\title{Well-Posedness and Comparison Principle for Option Pricing with Switching Liquidity}
\author{T.B. Gyulov and L.G. Valkov\thanks{e-mails: tgulov@uni-ruse.bg (T.Gyulov) and lvalkov@uni-ruse.bg (L.Valkov)
\vspace{6pt}}\\\vspace{6pt}  {\small\em{University of Ruse ``Angel Kanchev'', 7017 Ruse, Bulgaria}}}
\begin{document}
\maketitle
\begin{abstract}
We consider an integro-differential equation derived from a  system of coupled parabolic PDE and an ODE which describes an European option pricing with liquidity shocks. We study the well-posedness and prove comparison principle for the corresponding initial value problem.
\end{abstract}

\section{Introduction}
This work is devoted to the study of an initial value problem of the following form
\begin{equation}
\left\{\begin{array}{rl}\displaystyle
	\frac{\partial u}{\partial \tau}-\frac{1}{2}\sigma^2S^2\frac{\partial^2 u}{\partial S^2}&=-\nu_{01}e^{u(S,\tau )}\left(\displaystyle\nu_{10}\int_0^\tau e^{-u(S,s)}ds+e^{-\gamma h(S)}\right)+\kappa,\\
	u\left(S,0\right)&=\gamma h(S).
\end{array}\right.
	\label{eq:psintdiff}
\end{equation}
Here $\tau\in[0,T]$, $S\in(0,+\infty)$, $h(S)$ is a given function and  $\sigma$, $\nu_{01}$, $\nu_{10}$, $\kappa$ and $\gamma$ are constants.

The integro-differential equation in \eqref{eq:psintdiff} is derived from a system of coupled parabolic PDE and ODE which is suggested  by M. Ludkovski and Q. Shen \cite{LudShen} in European option pricing in a financial market switching between two states -a liquid state (0) and an illiquid (1) one.  We briefly describe their model. First, it is assumed that the dynamics of the liquidity  is represented by a continuous-time Markov chain $(M_t)$ 
with intensity rates of the transitions $0\rightarrow1$ and $1\rightarrow0$ and determined by the constants $\nu_{01}$ and $\nu_{10}$, respectively.
During the liquid phase $(M_t=0)$ the market dynamics follows the classical Black-Scholes model. More precisely, the price $S_t$ of a stock is modelled by geometric Brownian motion
\[dS_t=\mu S_tdt+\sigma S_tdW_t,\]
with drift $\mu$ and volatility $\sigma$ and a standard one-dimensional Brownian motion $(W_t)$ which is independent of the Markov chain $(M_t)$ (under the ``real world'' probability $\mathbb{P}$). Then the wealth process $(X_t)$ satisfies \[dX_t=\mu\pi_tX_tdt+\sigma\pi_tX_tdW_t,\] where $\pi_t$ denotes  the proportion of stock holdings in the total wealth $X_t$. For simplicity, it is assumed that the interest rate of the riskless asset is zero. 

Respectively, in the illiquid phase $(M_t=1)$, the market is static and trading in stock is not permitted, i.e., $dS_t=dX_t=0$.

The presence of liquidity shocks is a source of non-traded risk and makes the market incomplete. Ludkovski and Shen investigate  expected utility maximization with exponential utility function: 
\[u(x)=-e^{-\gamma x}, \] 
where $\gamma>0$ is the investor's risk aversion parameter. The value functions $\hat{U}^i(t,X,S)$, $i=0,1$ for the optimal investment problem are defined as follows:
\[\hat{U}^i(t,X,S):=\sup_{\pi_t}\mathbb{E}^{\mathbb{P}}_{t,X,S,i}\left[-e^{-\gamma\left(X_T+h(S_T)\right)}\right], \quad i=0,1,\]
where $\mathbb{E}^{\mathbb{P}}_{t,X,S,i}$ is the expectation under the measure $\mathbb{P}$ with starting values $S_t=S$, $X_t=X$ and $M_t=i$. The supremum above is taken over all admissible trading strategies $(\pi_t)$ and the function $h(S)$ denotes the terminal payoff of a contingent claim. 
Standard stochastic control methods and the properties of the exponential utility function imply that the value functions can be presented by 
\[\hat{U}^i(t,X,S)=-e^{-\gamma X}e^{-\gamma R^i(t,S)}, \quad i=0,1,\] 
where $R^i(t,S)$ are the unique viscosity solutions of the system (\cite{LudShen})
\begin{equation}\label{R01systemPDE}
\left\{
\begin{array}{l}
R^0_t+\frac{1}{2}\sigma^2S^2R^0_{SS}-\frac{\nu_{01}}{\gamma}e^{-\gamma(R^1-R^0)}+\frac{d_0+\nu_{01}}{\gamma}=0,\\
R^1_t-\frac{\nu_{10}}{\gamma}e^{-\gamma(R^0-R^1)}+\frac{\nu_{10}}{\gamma}=0,	
\end{array}
\right.
\end{equation}
with the terminal condition $R^i(T,S)=h(S)$, $i=0,1$. Here  $d_0:=\mu^2/2\sigma^2$.

Let  $p$ and $q$ denote the buyer's indifference prices corresponding to liquid and illiquid initial state respectively. They 
are defined as follows: $\hat{U}^0(t,X-p,S)=\hat{V}^0(t,X)$ and $\hat{U}^1(t,X-q,S)=\hat{V}^1(t,X)$ where $\hat{V}^i$, $i=0,1$ are the value functions of the Merton optimal investment problem (i.e. the case when $h(S)\equiv0$). It can be shown that $p$ and $q$ satisfy a system of differential equations which is quite similar to \eqref{R01systemPDE} (see \eqref{eq:pqsyst}). In fact, \[p=R^0+\gamma^{-1}\ln F_0(t)\qquad \text{and} \qquad q=R^1+\gamma^{-1}\ln F_1(t)\] where 
\begin{align*}
	F_0(t)&=c_1e^{\lambda_1t}+c_2e^{\lambda_2t}\\
	F_1(t)&=\frac{1}{\nu_{01}}\left(c_1\left(d_0+\nu_{01}-\lambda_1\right)e^{\lambda_1t}+c_2\left(d_0+\nu_{01}-\lambda_2\right)e^{\lambda_2t}\right)
\end{align*}
\begin{align*}
	\lambda_{1,2}&=\frac{d_0+\nu_{01}+\nu_{10}\pm\sqrt{\left(d_0+\nu_{01}+\nu_{10}\right)^2-4d_0\nu_{10}}}{2},\\
	c_1&=\frac{\lambda_2-d_0}{\lambda_2-\lambda_1}e^{-\lambda_1T},\qquad \text{and} \qquad c_2=\frac{\lambda_1-d_0}{\lambda_1-\lambda_2}e^{-\lambda_2T}.
\end{align*}

Indifference pricing was first used in the pioneering paper of Hodges and Neuberger  \cite{HodgesNeu}. We refer also to  \cite{Carmona} for further applications (see \cite{Leung} and \cite{Zhou} as well).

The existence of classical solutions was proved  in \cite{LudShen}  when the payoff function $h(S)$ is bounded. This case is restrictive since it does not include such typical example  as the call option $h=\max\left\{S-K,0\right\}$ with strike price $K$. We investigate the solvability of the problem  and prove the existence and uniqueness of a weak solution in suitable Sobolev weighted spaces which allows unbounded terminal payoff functions. 

The integro-differential equation  \eqref{eq:psintdiff} is derived from \eqref{R01systemPDE} as follows. Denote $r^0:=\gamma R^0$, $r^1=\gamma R^1$. The system of differential equations for $r^0$ and $r^1$ has the following from:
\begin{equation}\label{DiffeqSyst}
	\left\{
	\begin{array}{l}\displaystyle
		r^0_\tau-\frac{1}{2}\sigma^2S^2r^0_{SS}=-\nu_{01}e^{-\left(r^1-r^0\right)}+d_0+\nu_{01}\\
		\displaystyle
		r^1_\tau=-\nu_{10}e^{-\left(r^0-r^1\right)}+\nu_{10}
	\end{array}
	\right.
\end{equation}
where $\tau=T-t$. The ODE in \eqref{DiffeqSyst} can be solved explicitly with respect to $r^1$. Then  we obtain the initial value problem \eqref{eq:psintdiff} under the substitution $u:=r^0-\nu_{10}\tau$ and $\kappa:=d_0+\nu_{01}-\nu_{10}$.

The paper is organized as follows. In Section 2 we prove a comparison principle (Theorem \ref{th:comppr}) for classical solutions to the problem \eqref{eq:psintdiff}. Then, in Section 3 we prove  a comparison principle (Theorem \ref{th:weakCompPr}) for weak sub/super solutions. In addition, we study the existence and uniqueness of weak solutions in a suitable weighted Sobolev space (see Theorem \ref{th:Exist1}).
\section{Comparison principle for classical solutions}

In this section we consider solutions of \eqref{eq:psintdiff} satisfying 
\begin{equation}	
	\left|u\right|,\left|h\right|\leq A\exp\left(\alpha\ln^2S\right)=AS^{\alpha\ln S}, \label{eq:growthcond}%\qquad i=0,1
\end{equation}
for some positive constants $A$ and $\alpha$. Note that conditions \eqref{eq:growthcond} include for example linear growth, polinomial and powers of $S$ with arbitrary exponent.

We prove the following comparison principle:
\begin{theorem} Let $u_1,u_0\in C(\left(0,+\infty\right)\times\left[0,T\right))\cap C^{2,1}(\left(0,+\infty\right)\times\left(0,T\right))$ be two clasical solutions of \eqref{eq:psintdiff} corresponding to the initial data $h=h_1$ and $h=h_0$, respectively and such that the conditions \eqref{eq:growthcond} hold. Then
\begin{equation}
	\gamma\inf\left(h_1-h_0\right)\leq u_1-u_0\leq \gamma\sup\left(h_1-h_0\right). \label{eq:comppr}
\end{equation}
\label{th:comppr}
\end{theorem}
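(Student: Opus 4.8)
The plan is to combine three ingredients: an invariance of \eqref{eq:psintdiff} under constant shifts, a linearisation of the difference of two solutions, and a Phragm\'en--Lindel\"of-type maximum principle tuned to the growth class \eqref{eq:growthcond}. Throughout write $\mathcal{L}\varphi:=\varphi_\tau-\tfrac12\sigma^2S^2\varphi_{SS}$ and $\varphi^+:=\max(\varphi,0)$. \emph{Step 1 (reduction to monotonicity).} I would first note that \eqref{eq:psintdiff} is invariant under $u\mapsto u+C$, $h\mapsto h+C/\gamma$ for any constant $C$: $\mathcal{L}$ and $\partial_\tau$ annihilate constants, while on the right-hand side the factor $e^{u}$ contributes an $e^{C}$ that cancels the $e^{-C}$ produced in $\nu_{10}\int_0^\tau e^{-u(S,s)}\,ds$ and in $e^{-\gamma h}$; the datum becomes $\gamma(h+C/\gamma)$ and \eqref{eq:growthcond} is preserved (with larger constants). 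We may assume $\sup(h_1-h_0)<+\infty$ and $\inf(h_1-h_0)>-\infty$, \eqref{eq:comppr} being trivial otherwise. With $C:=\gamma\sup(h_1-h_0)$ the function $\bar u:=u_0+C$ solves \eqref{eq:psintdiff} with datum $\gamma h_0+C\ge\gamma h_1$ and obeys \eqref{eq:growthcond}. Hence it suffices to prove the \emph{monotonicity principle}: if $u,\bar u$ are classical solutions of \eqref{eq:psintdiff} satisfying \eqref{eq:growthcond}, with initial data $g,\bar g$ and $g\le\bar g$ pointwise, then $u\le\bar u$. Applying it to $(u_1,\bar u)$ gives the upper bound in \eqref{eq:comppr}; interchanging $u_1$ and $u_0$ gives the lower one.

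\emph{Step 2 (the linearised difference equation).} Put $w:=u-\bar u$, so $w(\cdot,0)=g-\bar g\le0$. Subtracting the right-hand sides of \eqref{eq:psintdiff} for $u$ and for $\bar u$ (the $\kappa$'s cancel) and using $e^{a}-e^{b}=(a-b)\int_0^1 e^{ta+(1-t)b}\,dt$ for the increments $e^{u}-e^{\bar u}$ (at time $\tau$) and $e^{u(S,\tau)-u(S,s)}-e^{\bar u(S,\tau)-\bar u(S,s)}$ (whose exponents differ by $w(S,\tau)-w(S,s)$) one obtains
\[
\mathcal{L}w+B(S,\tau)\,w(S,\tau)=-\nu_{01}f(S,\tau)+\int_0^\tau K(S,\tau,s)\,w(S,s)\,ds,
\]
where $f=e^{\bar u}\bigl(e^{-g}-e^{-\bar g}\bigr)\ge0$ (since $g\le\bar g$), $K(S,\tau,s)=\nu_{01}\nu_{10}\int_0^1 e^{t(u(S,\tau)-u(S,s))+(1-t)(\bar u(S,\tau)-\bar u(S,s))}\,dt>0$, and, decisively,
\[
B(S,\tau)=\nu_{01}e^{-g(S)}\!\int_0^1 e^{tu(S,\tau)+(1-t)\bar u(S,\tau)}\,dt+\int_0^\tau K(S,\tau,s)\,ds\;\ge\;\int_0^\tau K(S,\tau,s)\,ds .
\]
Thus the zeroth-order coefficient has the favourable sign, the source $-\nu_{01}f$ is $\le0$, the memory term is a Volterra operator with positive kernel, and $B$ dominates the total mass of that kernel. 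The coefficients $B,K$ are continuous but, in view of \eqref{eq:growthcond}, not bounded on $(0,+\infty)$.

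\emph{Step 3 (barrier and short-time iteration).} In $x=\ln S$ the operator reads $\partial_\tau-\tfrac12\sigma^2\partial_{xx}+\tfrac12\sigma^2\partial_x$ and $S^{\alpha\ln S}=e^{\alpha x^2}$ is the classical Tychonoff class. Fix $\theta>0$ with $\theta\le 4/\sigma^2$ and $\theta<1/(2\sigma^2\alpha)$, and set
\[
V(S,\tau):=(\theta-\tau)^{-1/2}\exp\!\Bigl(\tfrac12\ln S-\tfrac{\sigma^2}{8}\tau+\tfrac{(\ln S)^2}{2\sigma^2(\theta-\tau)}\Bigr),\qquad\tau\in[0,\theta).
\]
A direct computation gives $V>0$, $\mathcal{L}V=0$, and (by the choice of $\theta$) $V$ nondecreasing in $\tau$; moreover $V(\cdot,\tau)$ grows, as $S\to0^+$ and as $S\to+\infty$, strictly faster than $AS^{\alpha\ln S}$, uniformly for $\tau$ in each $[0,\theta']$ with $\theta'<\theta$. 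Since $V(S,s)\le V(S,\tau)$ for $s\le\tau$ and $B\ge\int_0^\tau K\,ds$, we get $\mathcal{L}V+BV\ge BV\ge V(S,\tau)\int_0^\tau K(S,\tau,s)\,ds\ge\int_0^\tau K(S,\tau,s)V(S,s)\,ds$, so $\varepsilon V$ is a supersolution of the nonlocal equation, and $d_\varepsilon:=w-\varepsilon V$ satisfies $\mathcal{L}d_\varepsilon+B\,d_\varepsilon\le\int_0^\tau K(S,\tau,s)\,d_\varepsilon(S,s)\,ds$ on $(0,+\infty)\times(0,\theta']$, with $d_\varepsilon(\cdot,0)<0$ and $d_\varepsilon(S,\tau)\le AS^{\alpha\ln S}-\varepsilon V(S,0)\to-\infty$ as $S\to0^+$ or $S\to+\infty$. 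Hence $\{d_\varepsilon\ge0\}$ lies in a bounded $S$-interval $\overline{I_\varepsilon}$ independent of $\tau$, on whose closure $K\le\kappa_\varepsilon<+\infty$, and $d_\varepsilon\le0$ on $\partial I_\varepsilon\times[0,\theta']$. Now pick $\delta>0$ with $\kappa_\varepsilon\delta^2<1$ and argue inductively on the blocks $[k\delta,(k+1)\delta]\subset[0,\theta']$: using $d_\varepsilon\le0$ on $[0,k\delta]$, the Volterra term over such a block is $\le\kappa_\varepsilon\delta\,M_k$ with $M_k:=\sup_{(0,+\infty)\times[k\delta,(k+1)\delta]}d_\varepsilon^+$, so $\mathcal{L}d_\varepsilon+B\,d_\varepsilon\le\kappa_\varepsilon\delta M_k$; the weak parabolic maximum principle on the bounded cylinder $\overline{I_\varepsilon}\times[k\delta,(k+1)\delta]$ (where $B\ge0$ and $d_\varepsilon\le0$ on the parabolic boundary) then gives $d_\varepsilon\le\kappa_\varepsilon\delta^2 M_k$ there, hence $M_k\le\kappa_\varepsilon\delta^2 M_k$, forcing $M_k=0$. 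Stepping through the blocks yields $w\le\varepsilon V$ on $(0,+\infty)\times[0,\theta']$; letting $\varepsilon\downarrow0$ gives $w\le0$ on $[0,\theta']$, and iterating in $\tau$ over intervals of length $\theta'$ (each time the kernel integral over the controlled past is $\le0$, and $V$ is re-centred) gives $w\le0$ on $(0,+\infty)\times[0,T)$.

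\emph{Main obstacle.} The hard part is the interplay between the memory term, the unbounded domain, and the unbounded coefficients: a bare maximum principle fails because the positive Volterra term may feed growth, while a global Gronwall estimate fails because $B,K$ are only locally bounded. The identity $B\ge\int_0^\tau K\,ds$ — which reflects the cooperative coupling of the parabolic $r^0$-equation to the ODE for $r^1$ in \eqref{DiffeqSyst} — is precisely what turns the explicit Tychonoff barrier into a supersolution of the \emph{full nonlocal} problem and localises everything to the bounded set $I_\varepsilon$; once that is in place, verifying $\mathcal{L}V=0$ together with the monotonicity of $V$ and running the short-time iteration are routine.
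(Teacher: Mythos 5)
Your argument is correct, and its two load-bearing ingredients are the same as in the paper's proof: the convex-combination (Hadamard) linearisation of the difference of right-hand sides (your Step 2 is the analogue of \eqref{eq:psidiff}, with the additional explicit observation that the zeroth-order coefficient dominates the kernel mass, $B\ge\int_0^\tau K\,ds$), and a Tychonoff-class caloric barrier together with a time step whose length depends only on $\alpha$ and $\sigma$ (your constraints $\theta\le 4/\sigma^{2}$, $\theta<1/(2\sigma^{2}\alpha)$ are exactly the paper's choice of $\bar{\tau}$ in Lemma \ref{lm:aux}; your $V$ is, up to a constant factor, the function \eqref{eq:wdef} with $\ln S+\tfrac12\sigma^{2}(T_1-\tau)$ in the numerator, and it does satisfy $\mathcal{L}V=0$ exactly, which is worth keeping in that form). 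Where you genuinely diverge is the endgame. The paper keeps the nonlocal inequality pointwise and derives a contradiction at an interior minimum of $\tilde{u}+\epsilon\omega$, using the monotonicity of $\omega$ in $\tau$ to give every term on the right of \eqref{eq:varphi} a favourable sign; you instead first normalise by the shift invariance $u\mapsto u+C$, $h\mapsto h+C/\gamma$ (a clean reduction of \eqref{eq:comppr} to a monotonicity statement, which the paper achieves implicitly by subtracting $\underline{h}$), then use $B\ge\int_0^\tau K\,ds$ and the $\tau$-monotonicity of $V$ to make $\varepsilon V$ a supersolution of the full nonlocal linearised problem, localise the possible positivity set of $w-\varepsilon V$ to a bounded cylinder, and conclude with the weak parabolic maximum principle combined with a short-block Volterra iteration ($\kappa_\varepsilon\delta^{2}<1$), iterating in $\tau$ with a uniform step as the paper does via Lemma \ref{lm:aux}. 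The paper's pointwise minimum argument is shorter and needs no block bookkeeping; your route is more modular, replaces the sign inspection at the minimum by a quantitative Volterra smallness estimate, and is closer in spirit to the energy/Gronwall mechanism the paper later employs for the weak comparison principle (Theorem \ref{th:weakCompPr}), so it would transfer more readily to non-classical solutions.
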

We will only prove the lower bound in \eqref{eq:comppr} since the upper one follows immediately from it. In addition, we can assume that 
\[\underline{h}:=\gamma\inf\left(h_1-h_0\right)>-\infty,
\]
otherwise the left inequality in \eqref{eq:comppr} is trivial. We will use the following auxiliary lemma
\begin{lemma} Let $u_1$ and $u_0$ be as in Theorem \ref{th:comppr} and $\tau_1\geq 0$ be such that $u_1\left(S,\tau\right)-u_0\left(S,\tau\right)\geq \underline{h}$ for any $\tau\in[0,\tau_1]$. Then, there exists a constant $\bar{\tau}>0$ such that
$u_1\left(S,\tau\right)-u_0\left(S,\tau\right)\geq \underline{h}$ for any $\tau\in[0,\tau_1+\bar{\tau})$. In addition, $\bar{\tau}$ depends only on $\alpha$ defined in \eqref{eq:growthcond} and $\sigma$.\label{lm:aux}
\end{lemma}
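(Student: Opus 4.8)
The plan is to set $w := u_1 - u_0$ and show that on the strip $(0,\infty)\times[0,\tau_1+\bar\tau)$ the function $w$ stays $\geq \underline h$, by deriving a parabolic differential inequality for $w$ and comparing it against a carefully chosen barrier of the form $\beta e^{\alpha\ln^2 S}$ (plus a time-dependent term) that exploits the Gaussian-type growth bound \eqref{eq:growthcond}. First I would subtract the two equations in \eqref{eq:psintdiff}. Writing $a(S)=\tfrac12\sigma^2S^2$, the PDE for $w$ reads
\begin{equation}
w_\tau - a(S)w_{SS} = -\nu_{01}\bigl(e^{u_1}-e^{u_0}\bigr)e^{-\gamma h_1} - \nu_{01}\nu_{10}\int_0^\tau\bigl(e^{u_1(S,\tau)}e^{-u_1(S,s)}-e^{u_0(S,\tau)}e^{-u_0(S,s)}\bigr)\,ds,
\end{equation}
together with the extra term $-\nu_{01}e^{u_0}\bigl(e^{-\gamma h_1}-e^{-\gamma h_0}\bigr)$ coming from the mismatch of the data. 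The point is that each of these source terms can be written, via the mean value theorem, as a coefficient (bounded on compact $S$-sets, and of controlled growth in $S$) times $w$ or times $w$ evaluated at an earlier time; crucially, on the set where $w\geq\underline h$ the "bad" signs work in our favour, because $e^{u_1}-e^{u_0}$ has the same sign as $w$.

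Next I would introduce the comparison function $\phi(S,\tau) := \underline h - \varepsilon - \delta\, e^{\alpha\ln^2 S}\,\tau$ for small $\varepsilon>0$ and a constant $\delta$ to be fixed, and argue that $w\geq\phi$ on the whole strip. The growth condition \eqref{eq:growthcond} guarantees that $w - \phi \to +\infty$ as $S\to 0^+$ or $S\to\infty$ uniformly in $\tau$ on bounded time intervals, so a negative minimum of $w-\phi$, if one existed, would be attained at an interior point; at $\tau=0$ we have $w=\gamma(h_1-h_0)\geq\underline h>\phi$, and for $\tau\leq\tau_1$ the hypothesis already gives $w\geq\underline h>\phi$. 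The key computation is that $L\phi := \phi_\tau - a(S)\phi_{SS}$ can be made strictly more negative than the source terms allow $w_\tau - a(S)w_{SS}$ to be, provided $\delta$ is large enough relative to the coefficients and $\bar\tau$ is small enough that the integral term (which involves $w$ on $[0,\tau]$, hence is controlled once we are still inside the region where the barrier works) does not overwhelm the gain. A direct differentiation gives $a(S)\partial_{SS}e^{\alpha\ln^2 S} = \tfrac12\sigma^2\bigl(2\alpha + 4\alpha^2\ln^2 S\bigr)e^{\alpha\ln^2 S}$, which is itself of order $e^{\alpha\ln^2 S}$ up to the polynomial-in-$\ln S$ factor — this is exactly why the Gaussian weight is the right choice and why $\bar\tau$ ends up depending only on $\alpha$ and $\sigma$: the factor $\bigl(2\alpha+4\alpha^2\ln^2 S\bigr)$ together with the exponential has to be dominated by $\delta$ times the same exponential after choosing $\delta$, and the resulting smallness condition on $\tau$ to absorb the lower-order source terms is governed by how large $\delta$ must be, i.e. by $\alpha$ and $\sigma$ alone. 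Letting $\varepsilon\downarrow0$ and then $\delta\downarrow0$ (or rather letting $\tau\uparrow\bar\tau$ with the barrier term vanishing) yields $w\geq\underline h$ on $[0,\tau_1+\bar\tau)$.

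The main obstacle I anticipate is handling the nonlocal (integral) term rigorously inside the maximum-principle argument: at a candidate interior minimum point $(S_\ast,\tau_\ast)$ of $w-\phi$, the usual sign information $w_\tau\leq\phi_\tau$, $w_{SS}\geq\phi_{SS}$ controls the local part, but the integral $\int_0^{\tau_\ast}(\cdots)\,ds$ involves $w(S_\ast,s)$ for all earlier $s$, where we only know $w\geq\phi$, not $w=\phi$. I would deal with this by a bootstrap/Gronwall-type estimate: bound the integrand from below using $w(S_\ast,s)\geq\phi(S_\ast,s)$ and $|u_i|\leq A e^{\alpha\ln^2 S}$, so that the whole integral term contributes at most a quantity proportional to $\bar\tau$ times something of order $e^{\alpha\ln^2 S_\ast}$ (with constants depending on $T$, $A$, $\nu_{01}$, $\nu_{10}$, $\underline h$), and then choose $\delta$ and $\bar\tau$ so that the strictly negative term $-\delta e^{\alpha\ln^2 S_\ast}$ in $L\phi$ beats it — contradicting the existence of the interior minimum. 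A secondary technical point is justifying that the infimum over $S$ is genuinely not attained "at infinity": this requires the barrier's growth $e^{\alpha\ln^2 S}$ to strictly dominate the a priori bound on $|w|$, which is fine since $|w|\leq 2A e^{\alpha\ln^2 S}$ but the barrier coefficient $\delta\tau$ can be taken small — so one should instead take the barrier slightly fatter, e.g. replace $\alpha$ by $\alpha' >\alpha$ in an auxiliary step, or add $\eta e^{\alpha'\ln^2 S}$ with $\eta\to0$; this is a routine modification once the core inequality is in place.
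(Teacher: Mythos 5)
Your overall strategy (linearize via the $\xi$-interpolation, add a barrier dominating the growth class $Ae^{\alpha\ln^2S}$, and run a minimum-principle argument in which the nonlocal terms have favourable signs) is the right one, but the specific barrier fails, and the quantitative fallback you invoke for the memory term is not available. For the barrier: with $\phi=\underline h-\varepsilon-\delta\,e^{\alpha\ln^2S}\,\tau$ one gets
\[
\phi_\tau-\tfrac12\sigma^2S^2\phi_{SS}
=\delta\,e^{\alpha\ln^2S}\Bigl[-1+\tau\,\tfrac{\sigma^2}{2}\bigl(2\alpha-2\alpha\ln S+4\alpha^2\ln^2S\bigr)\Bigr],
\]
and the bracket is positive whenever $\left|\ln S\right|$ is large, no matter how you choose the constants $\delta$ and $\bar\tau>0$: the factor $\ln^2S$ is unbounded in $S$, so ``$\delta$ times the same exponential'' can never dominate it, contrary to your key computation (as a minor point, $S^2\bigl(e^{\alpha\ln^2S}\bigr)''=\bigl(2\alpha-2\alpha\ln S+4\alpha^2\ln^2S\bigr)e^{\alpha\ln^2S}$; the dropped $-2\alpha\ln S$ is immaterial). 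Replacing $\alpha$ by $\alpha'>\alpha$, which you need anyway to force the minimum to be interior, only aggravates this. The paper's barrier is built precisely to avoid checking any such inequality: it takes $\omega$ from \eqref{eq:wdef}, an exact solution of $\omega_\tau-\tfrac12\sigma^2S^2\omega_{SS}=0$ with time-dependent exponent $1/\bigl(2\sigma^2(T_1-\tau)\bigr)$, which is increasing in $\tau$ for $T_1-\tau\le 4/\sigma^2$ and strictly dominates $Ae^{\alpha\ln^2S}$ exactly when $\alpha<1/\bigl(2\sigma^2(T_1-\tau)\bigr)$; these two constraints give $\bar\tau<\min\left\{\left(2\sigma^2\alpha\right)^{-1},4/\sigma^2\right\}$ and hence the dependence of $\bar\tau$ on $\alpha$ and $\sigma$ only. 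A fixed exponent times a linear-in-time coefficient cannot serve as a barrier on all of $(0,+\infty)$; a time-dependent exponent (or an exact caloric function such as $\omega$) is essential.

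The second gap is your treatment of the nonlocal term: you propose to bound it by ``$\bar\tau$ times something of order $e^{\alpha\ln^2S_*}$ with constants depending on $T,A,\nu_{01},\nu_{10},\underline h$''. This is false: the linearized kernels $\int_0^1e^{u_\xi(\tau)-u_\xi(s)}d\xi$ and $\int_0^1e^{u_\xi(\tau)-\gamma h_\xi}d\xi$ are only controlled by $e^{2Ae^{\alpha\ln^2S}}$ under \eqref{eq:growthcond} --- doubly exponential in $\ln^2 S$ --- which no admissible barrier can absorb. The paper never estimates their size; it uses only their positivity together with signs at the candidate minimum: since $\omega$ is increasing in $\tau$ on the slab, at an interior minimum of $\varphi_\epsilon=\tilde u+\epsilon\omega$ with $\varphi_\epsilon<\underline h$ one has $\tilde u(\tau_*)-\tilde h<0$, $\tilde u(\tau_*)-\tilde u(s)<0$ for $s\in[\tau_1,\tau_*]$, and, after splitting the memory integral at $\tau_1$ (a step your write-up omits), $\tilde u(\tau_*)-\underline h<0$ handles the part over $[0,\tau_1]$ via the induction hypothesis; hence the whole right-hand side is positive regardless of the kernels' magnitude, contradicting $(\varphi_\epsilon)_\tau=0$, $(\varphi_\epsilon)_{SS}\ge0$. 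If you retain your sign observation and discard the Gronwall-type bound, your argument collapses onto the paper's --- but only after replacing your barrier by one that is caloric (or a supersolution) globally in $S$ and dominates the growth class on a short time interval.
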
 
\begin{proof}
Let $u_1$ and $u_0$ be two solutions of \eqref{eq:psintdiff} corresponding to the initial conditions $u_1\left(S,0\right)=\gamma h_1(S)$ and $u_0\left(S,0\right)=\gamma h_0(S)$. Denote $\tilde{u}=u_1-u_0$, $\tilde{h}=\gamma\left(h_1-h_0\right)$, $u_\xi=\xi u_1+\left(1-\xi\right)u_0$, $h_\xi=\xi h_1 +\left(1-\xi\right)h_0$, for $\xi\in[0,1]$ and define
\[\mathcal{F}\left[\tau;u, g\right]:=-\nu_{01}e^{u(\tau )}\left(\nu_{10}\int_0^\tau e^{-u(s)}ds+e^{-g}\right)+\kappa.
\] 
Then
\begin{align}
	&\mathcal{F}\left[\tau;u_1, \gamma h_1\right]-\mathcal{F}\left[\tau;u_0,\gamma h_0\right]
	=\int_0^1\frac{d}{d\xi}\left(\mathcal{F}\left[\tau;u_\xi,\gamma h_\xi\right]\right) d\xi\\
	&\qquad=-\nu_{01}\tilde{u}\int_0^1e^{u_\xi(\tau )}\left(\nu_{10}\int_0^\tau e^{-u_\xi(s)}ds+e^{-\gamma h_\xi}\right)d\xi\\
	&\qquad\qquad +\nu_{01}\int_0^1e^{u_\xi(\tau )}\left(\nu_{10}\int_0^\tau e^{-u_\xi(s)}\tilde{u}\left(s\right)ds+e^{-\gamma h_\xi}\tilde{h}\right)d\xi\nonumber\\
	&\qquad=-\nu_{01}\nu_{10}\int_0^1\int_0^\tau e^{u_\xi(\tau )-u_\xi(s)}\left(\tilde{u}\left(\tau \right)-\tilde{u}\left(s\right)\right)dsd\xi\label{eq:difres}\\
	&\qquad\qquad -\nu_{01}\left(\tilde{u}\left(\tau \right)-\tilde{h}\right)\int_0^1e^{u_\xi(\tau )-\gamma h_\xi} d\xi\nonumber
\end{align}
and
\begin{align}\label{eq:psidiff}
	\tilde{u}_\tau-\frac{1}{2}\sigma^2S^2\tilde{u}_{SS}&=
	-\nu_{01}\nu_{10}\int_0^\tau \left(\tilde{u}\left(\tau \right)-\tilde{u}\left(s\right)\right)ds\int_0^1e^{u_\xi(\tau )-u_\xi(s)}d\xi\\
	&\qquad -\nu_{01}\left(\tilde{u}\left(\tau \right)-\tilde{h}\right)\int_0^1e^{u_\xi(\tau )-\gamma h_\xi} d\xi\nonumber
\end{align}

Next, define
\begin{equation}
	\omega(S,\tau):=\frac{1}{\sqrt{T_1-\tau}}\exp\left(\frac{\left(\ln S-\frac{1}{2}\sigma^2\left(T_1-\tau\right)\right)^2}{2\sigma^2\left(T_1-\tau\right)}\right), \label{eq:wdef}
\end{equation}
where  $T_1>0$  and $(S,\tau)\in\left(0,+\infty\right)\times\left[0,T_1\right)$. Note that $\mathcal{L}_{BS}\omega=\omega_\tau-\frac{1}{2}\sigma^2S^2\omega_{SS}=0$ and $\omega$ is increasing with respect to $\tau$ in the interval $\tau\in\left[T_1-4/\sigma^{2},T_1\right)$. 
Choose $T_1>\tau_1$ in \eqref{eq:wdef} such that the inequality
\[\alpha< \frac{1}{2\sigma^2\left(T_1-\tau\right)},
\]
holds for all $\tau\in[\tau_1,T_1)$ and $T_1-4/\sigma^{2}<\tau_1$. It is enough to define $T_1:=\tau_1+\bar{\tau}$, where $0<\bar{\tau}<\min\left\{\left(2\sigma^2\alpha\right)^{-1},4/\sigma^{2}\right\}$.
Next, let $\varphi_\epsilon=\tilde{u}+\epsilon \omega$.  Then
\begin{align}
	&\left(\varphi_\epsilon\right)_\tau-\frac{1}{2}\sigma^2S^2\left(\varphi_\epsilon\right)_{SS}=
	-\nu_{01}\nu_{10}\int_0^\tau \left(\tilde{u}\left(\tau \right)-\tilde{u}\left(s\right)\right)ds\int_0^1e^{u_\xi(\tau )-u_\xi(s)}d\xi\nonumber\\
	&\qquad -\nu_{01}\left(\tilde{u}\left(\tau \right)-\tilde{h}\right)\int_0^1e^{u_\xi(\tau )-\gamma h_\xi} d\xi\\
	&\qquad\geq-\nu_{01}\nu_{10}\left(\tilde{u}\left(\tau \right)-\underline{h}\right)\int_0^{\tau_1} ds\int_0^1e^{u_\xi(\tau )-u_\xi(s)}d\xi\label{eq:varphi}\\
	&\qquad\quad-\nu_{01}\nu_{10}\int_{\tau_1}^\tau \left(\tilde{u}\left(\tau \right)-\tilde{u}\left(s\right)\right)ds\int_0^1e^{u_\xi(\tau )-u_\xi(s)}d\xi\nonumber\\
	&\qquad -\nu_{01}\left(\tilde{u}\left(\tau \right)-\tilde{h}\right)\int_0^1e^{u_\xi(\tau )-\gamma h_\xi} d\xi\nonumber
\end{align}
We will prove that $\varphi_\epsilon\geq \underline{h}$ for any $\tau\in\left[\tau_1,T_1\right)$. Indeed, assume by contradiction that
$\inf\varphi_\epsilon< \underline{h}$. Note that $\left.\varphi_\epsilon\right|_{\tau=\tau_1}>\underline{h}$ and there exist $\bar{S}$ and $\underline{S}$ such that $\varphi_\epsilon>\underline{h}$ if either $S\leq \underline{S}$ or $S\geq \bar{S}$. In fact, $\varphi_\epsilon\rightarrow+\infty$ uniformly when either $\left|\ln S\right|\rightarrow+\infty$ or $\tau\rightarrow T_1$. The last observations imply that $\varphi_\epsilon$ attains minimum in an interior point $(S_*,\tau_*)\in (\underline{S},\bar{S})\times (\tau_1,T_1)$ and $\varphi_\epsilon(S_*,\tau_*)<\underline{h}$. Then, $\left(\varphi_\epsilon\right)_\tau(S_*,\tau_*)=0$, $\left(\varphi_\epsilon\right)_{SS}(S_*,\tau_*)\geq 0$ and 
\begin{align}
	\tilde{u}(S_*,\tau_*)-\tilde{h}&\leq \tilde{u}(S_*,\tau_*)-\underline{h}=\varphi_\epsilon(S_*,\tau_*)-\underline{h}-\epsilon \omega(S_*,\tau_*)<0\\
	\tilde{u}(S_*,\tau_*)-\tilde{u}(S_*,s)&=\varphi_\epsilon(S_*,\tau_*)-\varphi_\epsilon(S_*,s)\nonumber\\
	&\qquad-\epsilon\left(\omega(S_*,\tau_*)-\omega(S_*,s)\right)<0, \quad \forall s\in [\tau_1,\tau_*],
\end{align}
since $\omega$ is increasing in $\tau$. Thus the right hand side of \eqref{eq:varphi} is positive, a contradiction. 
Hence $\varphi_\epsilon=\tilde{u}+\epsilon \omega\geq\underline{h}$ for any $\tau\in\left[\tau_1,T_1\right)$. Let $\epsilon\rightarrow0$. Then $\tilde{u}=u_1-u_0\geq\underline{h}$ for any $\tau\in\left[\tau_1,T_1\right)$.
\end{proof}
\begin{proof} (of Theorem \ref{th:comppr}) The comparison principle follows by induction and the auxiliary Lemma \ref{lm:aux}: we first take $\tau_1=0$ and prove it in the interval $[0,1/2\bar{\tau}]$, then let $\tau_1=1/2\bar{\tau}$ and consider the interval $[1/2\bar{\tau},\bar{\tau}]$ and etc.
\end{proof}

%%%%%%%%%%%%%%%%%%%%%%%%%%%%%%%%%%%%%%%%%%%%%%%%%%%%%%%%%%%%%%%%%%%%%%%%%%%%%%%%%%%%%%%%%%%%%%%%%%%%%%%%%%%%

%\section{Buyer's indifference prices}
%
Now, as a corollary we formulate  comparison principle for the buyer's indifference prices $p(S,t)$, $q(S,t)$ which satisfy  the terminal value problem 
\begin{equation}\label{eq:pqsyst}
	\left\{
	\begin{array}{l}\displaystyle
		p_t+\frac{1}{2}\sigma^2S^2p_{SS}-\frac{\nu_{01}}{\gamma}\frac{F_1}{F_0}e^{-\gamma\left(q-p\right)}+\frac{d_0+\nu_{01}}{\gamma}-\frac{1}{\gamma}\frac{F_0^\prime}{F_0}=0\\
		\displaystyle
		q_t-\frac{\nu_{10}}{\gamma}\frac{F_0}{F_1}e^{-\gamma\left(p-q\right)}+\frac{\nu_{10}}{\gamma}-\frac{1}{\gamma}\frac{F_1^\prime}{F_1}=0\\
		p(S,T)=q(S,T)=h(S).
	\end{array}
		\right.
\end{equation}
By  \textit{classical solutions} of \eqref{eq:pqsyst} we mean functions such that $p\in  C(\left(0,+\infty\right)\times\left(0,T\right])\cap C^{2,1}(\left(0,+\infty\right)\times\left(0,T\right))$, $q\in C(\left(0,+\infty\right)\times\left(0,T\right])$, $q_t\in C(\left(0,+\infty\right)\times\left(0,T\right))$. 

Note that  
\begin{align}
	\gamma p&=\nu_{10}(T-t)+\ln F_0(t)+u(S,T-t),\label{eq:ppsi}\\
	\gamma q&=\nu_{10}(T-t)+\ln F_1(t)-\ln\left(\nu_{10}\int_0^{T-t} e^{-u(S,s)}ds+e^{-\gamma h(S)}\right),\label{eq:qpsi}
\end{align}
since $p(t)=\gamma^{-1}\left(r^0+\ln F_0(t)\right)$ and $q(t)=\gamma^{-1}\left(r^1+\ln F_1(t)\right)$. Then, a comparison principle in $(p,q)$ solutions will be equivalent to a comparison principle for the $(r^0,r^1)$ variables.

We consider growth conditions analogous to \eqref{eq:growthcond}
\begin{equation}	
	\left|p\right|,\left|h\right|\leq A\exp\left(\alpha\ln^2S\right)=AS^{\alpha\ln S}, \label{eq:growthcondp}
\end{equation}
for some positive constants $A$ and $\alpha$.
\begin{corollary} Let $\left(p_1,q_1\right)$ and $\left(p_0,q_0\right)$ be two classical solutions of the system \eqref{eq:pqsyst} corresponding to terminal data $h\equiv h_1(S)$ and $h\equiv h_0(S)$, respectively. If there exist some positive constants $A$ and $\alpha$ such that $p_i(S,t)$ and $h_i(S)$, $i=0,1$ satisfy the conditions \eqref{eq:growthcondp}, then
\begin{gather}
	\inf\left(h_1-h_0\right)\leq p_1(S,t)-p_0(S,t)\leq \sup\left(h_1-h_0\right),\label{eq:CPp}\\
	\inf\left(h_1-h_0\right)\leq q_1(S,t)-q_0(S,t)\leq \sup\left(h_1-h_0\right).\label{eq:CPq}
\end{gather}

In particular, let $h(S)$ be bounded from below (or from above) by a constant, i.e. 
 $h(S)\geq h_*$ (resp. $h(S)\leq h^*$) and $p(S,t)$, $q(S,t)$, be a classical solutions of the terminal value problem \eqref{eq:pqsyst} satisfying \eqref{eq:growthcondp}. Then 
\[p(S,t)\geq h_* \;\text{and}\;q(S,t)\geq h_* \ \text{(respectively } p(S,t)\leq h^* \;\text{and}\;q(S,t)\leq h^*\text{)},
\]
for any $S\in (0,+\infty)$ and any $t\in (0,T]$.
\end{corollary}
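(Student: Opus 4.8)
The plan is to transfer the comparison principle of Theorem~\ref{th:comppr} to the $(p,q)$ variables through the explicit relations \eqref{eq:ppsi}--\eqref{eq:qpsi}. The first step is to observe that every classical solution $(p,q)$ of \eqref{eq:pqsyst} with terminal datum $h$ produces, via
\[
u(S,\tau):=\gamma p(S,T-\tau)-\nu_{10}\tau-\ln F_0(T-\tau),
\]
a classical solution of \eqref{eq:psintdiff} on $(0,+\infty)\times[0,T)$ with initial datum $\gamma h$. Indeed, the regularity $u\in C((0,+\infty)\times[0,T))\cap C^{2,1}((0,+\infty)\times(0,T))$ is inherited from that of $p$ together with the smoothness and positivity of $F_0$; the fact that $u$ solves \eqref{eq:psintdiff} is exactly the change of variables recalled in the Introduction ($p=\gamma^{-1}(r^0+\ln F_0)$, solve the ODE in \eqref{DiffeqSyst} for $r^1$, set $u=r^0-\nu_{10}\tau$, which is \eqref{eq:ppsi}); and $u(S,0)=\gamma h(S)$ because $F_0(T)=1$. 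Finally the growth bound \eqref{eq:growthcondp} on $p$ passes to $u$: since $\exp(\alpha\ln^2 S)\ge 1$, the bounded additive terms $\nu_{10}\tau$ and $\ln F_0(T-\tau)$ can be absorbed into a larger constant $A$, so $u$ and $\gamma h$ satisfy \eqref{eq:growthcond} with the same $\alpha$.

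Let $u_1,u_0$ be the functions so attached to $(p_1,q_1)$ and $(p_0,q_0)$. Theorem~\ref{th:comppr} then gives $\gamma\inf(h_1-h_0)\le u_1-u_0\le\gamma\sup(h_1-h_0)$ pointwise. Subtracting the two instances of \eqref{eq:ppsi} makes the datum-independent terms $\nu_{10}(T-t)$ and $\ln F_0(t)$ cancel, so $\gamma(p_1-p_0)(S,t)=u_1(S,T-t)-u_0(S,T-t)$, and dividing by $\gamma>0$ yields \eqref{eq:CPp}. Subtracting the two instances of \eqref{eq:qpsi} cancels $\nu_{10}(T-t)$ and $\ln F_1(t)$ and leaves
\[
\gamma(q_1-q_0)(S,t)=-\ln\frac{\nu_{10}\int_0^{T-t}e^{-u_1(S,s)}\,ds+e^{-\gamma h_1(S)}}{\nu_{10}\int_0^{T-t}e^{-u_0(S,s)}\,ds+e^{-\gamma h_0(S)}}.
\]
Writing $\underline{h}:=\gamma\inf(h_1-h_0)$, the inequality $u_1-u_0\ge\underline{h}$ gives $e^{-u_1(S,s)}\le e^{-\underline{h}}e^{-u_0(S,s)}$ and likewise $e^{-\gamma h_1(S)}\le e^{-\underline{h}}e^{-\gamma h_0(S)}$, so the fraction above is $\le e^{-\underline{h}}$ and hence $\gamma(q_1-q_0)\ge\underline{h}$; the upper estimate is symmetric, using $u_1-u_0\le\gamma\sup(h_1-h_0)$. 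This gives \eqref{eq:CPq}.

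For the last assertion, apply \eqref{eq:CPp}--\eqref{eq:CPq} with $h_1:=h$ and $h_0\equiv h_*$ constant. A short computation --- using the identities $F_0'=(d_0+\nu_{01})F_0-\nu_{01}F_1$ and $F_1'=\nu_{10}(F_1-F_0)$, which follow from the explicit formulas for $F_0,F_1$ and the fact that $\lambda_{1,2}$ are the roots of $\lambda^2-(d_0+\nu_{01}+\nu_{10})\lambda+d_0\nu_{10}=0$ --- shows that the constant pair $(p_0,q_0)\equiv(h_*,h_*)$ is a classical solution of \eqref{eq:pqsyst} with terminal datum $h_0\equiv h_*$, and it trivially satisfies \eqref{eq:growthcondp}. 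Then \eqref{eq:CPp}--\eqref{eq:CPq} give $p-h_*\ge\inf(h-h_*)\ge 0$ and $q-h_*\ge\inf(h-h_*)\ge 0$; the case $h\le h^*$ is identical with $h_0\equiv h^*$.

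The one place needing care is the first paragraph --- verifying that the passage to $u$ really does send classical solutions of \eqref{eq:pqsyst} obeying \eqref{eq:growthcondp} to classical solutions of \eqref{eq:psintdiff} obeying \eqref{eq:growthcond}. Once that bookkeeping is in place, the corollary is just Theorem~\ref{th:comppr} combined with the monotonicity of $x\mapsto e^{-x}$ and $x\mapsto\ln x$.
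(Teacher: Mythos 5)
Your proposal is correct and follows essentially the paper's route: \eqref{eq:CPp} comes from Theorem \ref{th:comppr} through the representation \eqref{eq:ppsi}, \eqref{eq:CPq} comes from \eqref{eq:qpsi} together with the bound on $u_1-u_0$, and the second assertion is obtained by checking that the constant pair $(h_*,h_*)$ solves \eqref{eq:pqsyst}, using exactly the identities $F_0^\prime=(d_0+\nu_{01})F_0-\nu_{01}F_1$ and $F_1^\prime=\nu_{10}\left(F_1-F_0\right)$ that the paper verifies. The only local difference is in the step for \eqref{eq:CPq}: the paper writes $q_1-q_0$ via the differentiation-in-$\xi$ identity as a weighted average of $u_1-u_0$ and $h_1-h_0$ and then invokes \eqref{eq:comppr}, whereas you bound the ratio of the two bracketed expressions directly by monotonicity of $e^{-x}$ and $\ln$; both are valid, yours being slightly more elementary, and your explicit verification that the change of variables carries \eqref{eq:growthcondp} into \eqref{eq:growthcond} (using $F_0(T)=1$ and the boundedness of $\ln F_0$) is bookkeeping the paper leaves implicit in the remarks preceding the corollary.
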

\begin{proof}
The inequalities \eqref{eq:CPp} follow immediately from Theorem \ref{th:comppr} and representation \eqref{eq:ppsi}. In order to prove \eqref{eq:CPq} we will use \eqref{eq:qpsi}, i.e.
\[q_i(\cdot,t)=\gamma^{-1}\left[\nu_{10}(T-t)+\ln F_1(t)-\ln\left(\nu_{10}\int_0^{T-t} e^{-u_i(\cdot,s)}ds+e^{-\gamma h_i(\cdot)}\right)\right],
\]
for $i=0,1$.
Similarly to the proof of Lemma \ref{lm:aux} we derive
\begin{align*}
	q_1(\cdot,t)-q_0(\cdot,t)&=-\gamma^{-1}\int_0^1\frac{d}{d\xi}\left[\ln\left(\nu_{10}\int_0^{T-t} e^{-u_\xi(\cdot,s)}ds+e^{-\gamma h_\xi(\cdot)}\right)\right]d\xi\\
	&=\gamma^{-1}\int_0^1\frac{\nu_{10}\int_0^{T-t} e^{-u_\xi(\cdot,s)}\left(u_1(\cdot,s)-u_0(\cdot,s)\right)ds}{\nu_{10}\int_0^{T-t} e^{-u_\xi(\cdot,s)}ds+e^{-\gamma h_\xi(\cdot)}}d\xi\\
	&\qquad\qquad+(h_1(\cdot)-h_0(\cdot))\int_0^1\frac{e^{-\gamma h_\xi(\cdot)}}{\nu_{10}\int_0^{T-t} e^{-u_\xi(\cdot,s)}ds+e^{-\gamma h_\xi(\cdot)}}d\xi
\end{align*}
Now, \eqref{eq:comppr} implies the estimates \eqref{eq:CPq}.

The second part follows immediately due to the fact that $p_*(S,t)\equiv h_*$ and $q_*(S,t)\equiv h_*$ are the solutions of the problem \eqref{eq:pqsyst} with constant terminal condition $h\equiv h_*$. Indeed, if we formally substitute $p_*(S,t)\equiv h_*$ and $q_*(S,t)\equiv h_*$ in \eqref{eq:pqsyst}, then we arrive at the conclusion that it is sufficient to check the following identities
\begin{align}
	-\frac{\nu_{01}}{\gamma}\frac{F_1}{F_0}+\frac{d_0+\nu_{01}}{\gamma}-\frac{1}{\gamma}\frac{F_0^\prime}{F_0}=0,\\
	-\frac{\nu_{10}}{\gamma}\frac{F_0}{F_1}+\frac{\nu_{10}}{\gamma}-\frac{1}{\gamma}\frac{F_1^\prime}{F_1}=0,
\end{align}
or equivalently
\begin{align}
	F_0^\prime&=-\nu_{01}F_1+\left(d_0+\nu_{01}\right)F_0,\\
	F_1^\prime&=-\nu_{10}F_0+\nu_{10}F_1,
\end{align}
which follow directly from the definition of $F_0$ and $F_1$.
\end{proof}
%%%%%%%%%%%%%%%%%%%%%%%%%%%%%%%%%%%%%%%%%%%%%%%%%%%%%%%%%%%%%%%%%%%%%%%%%%%%%%%%%%%%

\section{Existence of weak solutions}

In this section we study the existence and uniqueness of weak solutions in suitable function spaces. First we introduce the weighted $L^2$ space 
\[L_w^2:=\left\{u:\left\|u\right\|^2_{0}:=\int_0^{+\infty}u^2(S)w(S)dS<\infty\right\},
\]
given a weight function $w>0$. Then we define a weighted Sobolev space as follows
\[H_w^1:=\left\{u:u\in L_w^2  \text{ s.t. } Su^\prime(S) \in L_w^2\right\},
\]
with norm $\left\|\cdot\right\|_{1}$ such that $\left\|u\right\|^2_{1}=\left\|u\right\|^2_{0}+\left\|Su^\prime\right\|^2_{0}$.

Let $\xi:[0,+\infty)\rightarrow[0,1]$  be increasing, infinitely continuously differentiable function and such that $\xi\equiv 0$ on $[0,1/2]$ and $\xi\equiv 1$ on $[1,+\infty)$. We will use $\xi$ to construct a sequence $\left\{u_\epsilon\right\}$ of compactly supported functions converging in $H^1_w$  to a given element $u\in H^1_w$. More precisely, the following auxiliary result holds. 
\begin{lemma} \label{lm:xieps} Let $\xi_\epsilon(x):=\xi(x/\epsilon)\left[1-\xi(x\epsilon/2)\right]$, $0<\epsilon<1$ and $u_\epsilon:=\xi_\epsilon u$. Then $u_\epsilon\rightarrow u$ in $H^1_w$, as $\epsilon\rightarrow 0$.
\end{lemma}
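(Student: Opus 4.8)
The plan is to show both the $L^2_w$ convergence and the convergence of the derivative term $S u_\epsilon' \to S u'$ in $L^2_w$, using dominated convergence for each piece. First I would compute $u_\epsilon - u = (\xi_\epsilon - 1)u$ and note that $\xi_\epsilon(x) = 1$ for $x \in [\epsilon, 2/\epsilon]$, so $\xi_\epsilon - 1$ is supported on $[0,\epsilon] \cup [2/\epsilon, +\infty)$ and is bounded by $1$ in absolute value; hence $|u_\epsilon - u|^2 w \le |u|^2 w \cdot \mathbf{1}_{[0,\epsilon]\cup[2/\epsilon,\infty)}$, which is dominated by the integrable function $|u|^2 w$ and tends to $0$ pointwise a.e. as $\epsilon \to 0$. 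Dominated convergence then gives $\|u_\epsilon - u\|_0 \to 0$.

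Next I would handle the derivative. By the product rule, $S u_\epsilon' = S \xi_\epsilon u' + S \xi_\epsilon' u$, so
\[
S u_\epsilon' - S u' = (\xi_\epsilon - 1)\, S u' + S \xi_\epsilon'(S)\, u(S).
\]
The first term is treated exactly as before: $Su' \in L^2_w$ by assumption, $|\xi_\epsilon - 1| \le 1$, and the integrand goes to $0$ a.e., so this contribution vanishes in $\|\cdot\|_0$. The main obstacle is the second term $S\xi_\epsilon'(S)u(S)$, where the cutoff is differentiated. Here I would use $\xi_\epsilon'(x) = \frac{1}{\epsilon}\xi'(x/\epsilon)[1-\xi(x\epsilon/2)] - \frac{\epsilon}{2}\xi(x/\epsilon)\xi'(x\epsilon/2)$. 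The first piece is supported on $x \in [\epsilon/2, \epsilon]$, where $x \le \epsilon$ forces $\frac{1}{\epsilon}|x| \le 1$, so $|S\xi_\epsilon'| \le \|\xi'\|_\infty$ on a set shrinking to a null set near $0$; the second piece is supported on $x \in [1/\epsilon, 2/\epsilon]$, where $\frac{\epsilon}{2}|x| \le 1$, so again $|S\xi_\epsilon'| \le \|\xi'\|_\infty$ on a set escaping to infinity. In both cases $|S\xi_\epsilon'(S)u(S)|^2 w(S) \le \|\xi'\|_\infty^2\, |u(S)|^2 w(S)$, dominated by an integrable function and converging to $0$ a.e., so dominated convergence finishes the argument.

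Putting the pieces together via the triangle inequality in $\|\cdot\|_0$ gives $\|u_\epsilon - u\|_1^2 = \|u_\epsilon - u\|_0^2 + \|S u_\epsilon' - S u'\|_0^2 \to 0$. I expect the only subtlety is bookkeeping the supports of the two terms in $\xi_\epsilon'$ and verifying the uniform bound $|S\xi_\epsilon'(S)| \le \|\xi'\|_\infty$ on each of them; everything else is a routine application of the dominated convergence theorem with dominating function built from the fact that $u \in H^1_w$.
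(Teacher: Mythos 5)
Your proof is correct and takes essentially the same route as the paper: the same decomposition $(u_\epsilon-u)'=(\xi_\epsilon-1)u'+\xi_\epsilon' u$, the uniform bound on $S\xi_\epsilon'(S)$ coming from the two terms of $\xi_\epsilon'$, and the dominated convergence theorem with dominating function built from $u\in H^1_w$. One harmless bookkeeping slip: since $\xi(x\epsilon/2)$ starts leaving $0$ at $x=1/\epsilon$, we have $\xi_\epsilon\equiv 1$ only on $[\epsilon,1/\epsilon]$, so $\xi_\epsilon-1$ is supported on $[0,\epsilon]\cup[1/\epsilon,+\infty)$ rather than $[0,\epsilon]\cup[2/\epsilon,+\infty)$ --- this does not affect your argument, because the domination $|\xi_\epsilon-1|\le 1$ and the pointwise convergence $\xi_\epsilon(x)\to 1$ for each fixed $x>0$ are all that the dominated convergence step actually uses.
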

\begin{proof} Note that $\left(u-u_\epsilon\right)^\prime=\left(1-\xi_\epsilon\right)u^\prime-\xi_\epsilon^\prime u$, 
\[S\xi_\epsilon^\prime(S)=\left(S/\epsilon\right)\xi^\prime(S/\epsilon)\left[1-\xi(S\epsilon/2)\right]-\left(S\epsilon/2\right)\xi^\prime(S\epsilon/2)\xi(S/\epsilon)\] 
is uniformly bounded with respect to $\epsilon$ and $1-\xi_\epsilon\rightarrow 0$ as well as $S\xi_\epsilon^\prime(S) \rightarrow 0$ as $\epsilon\rightarrow0$. Then the Lebesgue's dominated convergence theorem implies that $\left\|u-u_\epsilon\right\|\rightarrow 0$ as $\epsilon\rightarrow0$.
\end{proof}

Next, let $u(S)$ be twice continuously differentiable on $(0,+\infty)$ and denote the operator $\mathcal{L}u:=-\frac{1}{2}\sigma^2S^2u^{\prime\prime}$. Then after integration by parts we formally obtain:
\begin{align*}
	(\mathcal{L}u,v)_{L_w^2}&=-\frac{1}{2}\sigma^2\int_0^{+\infty}wS^2u^{\prime\prime}vdS\\
	&=\frac{1}{2}\sigma^2\int_0^{+\infty}\left[wS^2u^{\prime}v^{\prime}+\left(S\frac{w^\prime}{w}+2\right)wSu^{\prime}v\right]dS,
\end{align*}
provided that the integrals above are well-defined, $w$ is continuously differentiable and $wS^2u^{\prime}v\rightarrow$ as $S\rightarrow0$ and $S\rightarrow\infty$. For example, the above holds when $v$ is continuously differentiable and with compact support.

Following the above observations we introduce the bilinear form:
\begin{equation}\label{eq:abilindef}
	a(u,v):=\frac{1}{2}\sigma^2\int_0^{+\infty}wSu^{\prime}\left[Sv^{\prime}+\left(S\frac{w^\prime}{w}+2\right)v\right]dS.
\end{equation}
If the weight function $w$ 
is twice continuously differentiable, and there exists a constant $C>0$, such that 
\begin{equation}
\left|S\frac{w^\prime(S)}{w(S)}\right|,\left|S^2\frac{w^{\prime\prime}(S)}{w(S)}\right|\leq C, \forall S\in (0,+\infty).
\label{eq:wassump}
\end{equation} 
then the bilinear form $a(u,v)$ is continuous and semi-coercive on $H_w^1$, i.e., 
\begin{align}
	\left|a\left(u,v\right)\right|&\leq c\left\|u\right\|_1\left\|v\right\|_1, \quad \forall u,v\in H_w^1\label{eq:acont}\\
	a\left(u,u\right)&\geq \alpha\left\|u\right\|_1^2-\beta\left\|u\right\|_0^2, \quad \forall u\in H_w^1\label{eq:semicoerc}
\end{align}
for some suitable constants $c>0$, $\alpha>0$ and $\beta>0$ which are independent of $u$ and $v$.

We can choose such weight function that the call option payoff function $h=\max\left\{S-K,0\right\}$  belongs to the space $H_w^1$, for example, take $w:=(1+S)^\gamma$, where $\gamma<-3$. 

In addition, we assume that 
\begin{equation}
\theta:=\int_0^{+\infty} w(S)dS<+\infty.
\label{eq:wassumpintegr}
\end{equation}
This assumption guarantees that any bounded and measurable function belongs to $L^2_w$.
\begin{lemma} \label{lm:Est}There exists  a constant $c_0>0$ such that 
\begin{equation}
\left|u(S)\right|^2\leq c_0\left\|u\right\|_1^2\frac{1}{S} \exp (C\left|\ln S\right|) ,\qquad \forall u\in H^1_w,
\label{eq:Est1}
\end{equation}
where $C$ satisfies \eqref{eq:wassump}.
\end{lemma}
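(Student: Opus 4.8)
The plan is to recognize \eqref{eq:Est1} as a weighted one‑dimensional Sobolev embedding and to prove it by passing to the logarithmic variable $x=\ln S$, in which the weighted $H^1_w$ norm becomes an ordinary weighted $H^1(\mathbb R)$ norm and the growth factor $S^{-1}\exp(C|\ln S|)$ becomes, up to a constant, the reciprocal of the transformed weight. I would first reduce to functions with compact support in $(0,+\infty)$. On every compact $[a,b]\subset(0,+\infty)$ the weight $w$ and the factor $S$ are bounded between positive constants, so there $H^1_w$ agrees with the usual $H^1(a,b)$; in particular each $u\in H^1_w$ has a locally absolutely continuous representative, which I use henceforth. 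Given $u\in H^1_w$, the functions $u_\epsilon=\xi_\epsilon u$ of Lemma \ref{lm:xieps} are compactly supported in $(0,+\infty)$, satisfy $\|u_\epsilon\|_1\to\|u\|_1$, and for each fixed $S>0$ coincide with $u$ at $S$ once $\epsilon$ is small; hence \eqref{eq:Est1} for all the $u_\epsilon$ implies it for $u$ on letting $\epsilon\to0$.

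So assume $u\in H^1_w$ with compact support in $(0,+\infty)$. Set $v(x):=u(e^x)$ and $\tilde w(x):=e^x w(e^x)$; then $v$ is absolutely continuous with compact support, $v'(x)=Su'(S)$, and the change of variables gives $\int_{\mathbb R}(v^2+v'^2)\,\tilde w\,dx=\|u\|_1^2$. The first bound in \eqref{eq:wassump} says $|(\ln w(e^x))'|=|Sw'(S)/w(S)|\le C$, and therefore also $|(\ln\tilde w)'(x)|=|1+Sw'(S)/w(S)|\le 1+C$. Integrating the former from $0$ to $x$ gives $w(e^x)\ge w(1)e^{-C|x|}$, hence the lower bound $\tilde w(x)\ge w(1)\,e^{\,x-C|x|}$ for all $x\in\mathbb R$.

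Next I would apply the fundamental theorem of calculus to the absolutely continuous, compactly supported function $x\mapsto v(x)^2\tilde w(x)$, which vanishes as $x\to-\infty$:
\[
v(x)^2\tilde w(x)=\int_{-\infty}^{x}\bigl(2vv'\tilde w+v^2\tilde w'\bigr)\,dt .
\]
Using $2|vv'|\le v^2+v'^2$ and $|\tilde w'|=\tilde w\,|(\ln\tilde w)'|\le(1+C)\tilde w$, the integrand is bounded in absolute value by $(2+C)(v^2+v'^2)\tilde w$, so
\[
v(x)^2\tilde w(x)\le(2+C)\int_{\mathbb R}(v^2+v'^2)\tilde w\,dt=(2+C)\,\|u\|_1^2 .
\]
Dividing by $\tilde w(x)$ and using $\tilde w(x)\ge w(1)e^{\,x-C|x|}$ yields $v(x)^2\le \tfrac{2+C}{w(1)}\|u\|_1^2\,e^{-x+C|x|}$, which, rewritten in terms of $S=e^x$ (so $e^{-x+C|x|}=S^{-1}\exp(C|\ln S|)$), is exactly \eqref{eq:Est1} with $c_0=(2+C)/w(1)$.

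The arithmetic here is elementary; the step that really matters is choosing $v^2\tilde w$ as the quantity to differentiate — it is precisely the combination whose integral is controlled by $\|u\|_1^2$ and whose weight $\tilde w$ is, thanks to \eqref{eq:wassump}, comparable from below to $S\exp(-C|\ln S|)$, which is what makes the prescribed growth factor appear. The only remaining technical points — justifying the fundamental theorem of calculus for the absolutely continuous (not necessarily smooth) representative, and the density reduction via Lemma \ref{lm:xieps} — are routine.
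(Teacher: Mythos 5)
Your argument is correct, and it reaches the stated estimate with an explicit constant $c_0=(2+C)/w(1)$; but it follows a genuinely different route from the paper. The paper fixes $S$ and uses the dilation $v(\zeta):=u(\zeta S)$, invoking the Sobolev embedding once, at the single point $\zeta=1$, in the form $|v(1)|^2\leq c_0\|v\|_1^2$, and then estimates how the $H^1_w$ norm transforms under this dilation: the change of variables produces the ratio $w(\zeta)/\bigl(Sw(\zeta S)\bigr)=S^{-1}\exp\bigl(\int_{\zeta S}^{\zeta}\frac{w'(\xi)}{w(\xi)}d\xi\bigr)\leq S^{-1}\exp\left(C\left|\ln S\right|\right)$, which is exactly where the first bound in \eqref{eq:wassump} enters. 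You instead pass to the logarithmic variable $x=\ln S$, rewrite $\|u\|_1^2=\int_{\mathbb R}(v^2+v'^2)\tilde w\,dx$ with $\tilde w(x)=e^xw(e^x)$, and prove the embedding directly by the fundamental theorem of calculus applied to $v^2\tilde w$, using \eqref{eq:wassump} twice: once to bound $|(\ln\tilde w)'|\leq 1+C$ and once to get the pointwise lower bound $\tilde w(x)\geq w(1)e^{x-C|x|}$. The underlying mechanism is the same (integrating $Sw'/w$ over a logarithmic interval), but your proof is self-contained — it does not cite the Sobolev embedding theorem as a black box, it yields an explicit constant, and it makes the density/absolute-continuity reduction via Lemma \ref{lm:xieps} explicit — whereas the paper's scaling trick is shorter and isolates all analytic input in the single inequality \eqref{eq:Estu1} at $S=1$. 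The technical prerequisites you flag (local absolute continuity of the representative, $w\in C^1$ so that $\tilde w'$ exists, and passing from $u_\epsilon$ to $u$ for fixed $S$) are all satisfied under the paper's standing assumptions, so there is no gap.
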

\begin{proof}
Note that there exists a constant $c_0$ such that
\begin{equation}
\left|u(1)\right|^2\leq c_0\left\|u\right\|_1^2,\qquad \forall u\in H^1_w,
\label{eq:Estu1}
\end{equation}
due to the Sobolev embbeding theorem.

Let $S$ be fixed and denote $v(\zeta):=u(\zeta S)$. We have
\begin{align}
	\left\|v\right\|_1^2&=\int_0^{+\infty}w(\zeta)\left(\zeta^2S^2\left(u^\prime(\zeta S)\right)^2+u^2(\zeta S)\right)d\zeta\\
	&=\int_0^{+\infty}\frac{w(\zeta)}{Sw(\zeta S)}w(\zeta S)\left(\zeta^2S^2\left(u^\prime(\zeta S)\right)^2+u^2(\zeta S)\right)d\left(S\zeta\right)\\
	&\leq \frac{1}{S} \exp (C\left|\ln S\right|) \left\|u\right\|_1^2,
\end{align}
since
\[\frac{w(\zeta)}{Sw(\zeta S)}=\frac{1}{S} \exp \left(\int_{\zeta S}^\zeta \frac{w^\prime\left(\xi\right)}{w\left(\xi\right)}d\xi\right)\leq \frac{1}{S} \exp \left(C\left|\ln S\right|\right).
\]
Then \eqref{eq:Est1} follows from \eqref{eq:Estu1} since $v(1)=u(S)$.
\end{proof}

The space $H^1_w$ is densely and continuously embbeded in $L^2_w$. We consider the Gelfand triples
\[H^1_w\subset L^2_w\subset H^{*}_w,
\]
and 
\[L^2(0,T;H^1_w)\subset L^2(0,T;L^2_w)\subset L^2(0,T;H^{*}_w),
\]
where $H^{*}_w$ is the dual of $H^1_w$. Next, we define the set
\begin{equation}
W(0,T):=\left\{u\in L^2(0,T;H^1_w),\dot{u}\in L^2(0,T;H^*_w)\right\},
\label{eq:Wdef}
\end{equation}
where $\dot{u}$ is the distributional derivative of $u$. It is well known (see Lions and Magenes\cite{LionsMag}) that
\[W(0,T)\subset C([0,T],L^2_w).
\]

For simplicity we will further write $u(\tau)$ instead of $u(S,\tau)$ when this does not lead to misunderstanding.
Recall that 
\[\mathcal{F}\left[\tau;u,\gamma h\right]:=-\nu_{01}e^{u(\tau )}\left(\nu_{10}\int_0^\tau e^{-u(s)}ds+e^{-\gamma h}\right)+\kappa.
\] 
\begin{definition} \label{def:Soldef} A function $u\in W(0,T)$ is called \textit{weak supersolution (subsolution)} of the initial value problem \eqref{eq:psintdiff} if $u(0)\geq \gamma h$ (resp. $u(0)\leq \gamma h $) and for a.a. $\tau\in (0,T)$ the inequality
\begin{equation}
\left\langle\dot{u},v\right\rangle+a(u,v)\geq (\leq) \int_0^{+\infty}w\mathcal{F}\left[\tau;u,\gamma h\right]vdS,
\label{eq:SupSoldef}
\end{equation}
holds for any nonegative $v\in H^1_w$.
Respectively, the function $u\in W(0,T)$ is called \textit{weak solution} of the initial value problem \eqref{eq:psintdiff} if $u(0)=\gamma h$ and for a.a. $\tau\in (0,T)$ the equality 
\begin{equation}
\left\langle\dot{u},v\right\rangle+a(u,v)=\int_0^{+\infty}w\mathcal{F}\left[\tau;u,\gamma h\right]vdS, \qquad \forall v\in H^1_w,
\label{eq:WeakSoldef}
\end{equation}
holds.
\end{definition}
Next, we prove the following comparison principle for  
weak super/subsolutions satisfying growth conditions of type \eqref{eq:growthcond}.
\begin{theorem} \label{th:weakCompPr} Let $\overline{u}$ be a weak supersolution of the initial value problem \eqref{eq:psintdiff} with initial data $h(S)\equiv \overline{h}$ and $\underline{u}$ be a weak subsolution corresponding to the initial data $h(S)\equiv \underline{h}$ where $\underline{h}$ and $\overline{h}$ are given and $\underline{h}\leq \overline{h}$. Assume in addition, that  there exist positive constants $A$ and $\alpha$ such that 
\begin{equation}
\left|\underline{h}\right|,\left|\overline{h}\right|,\left|\underline{u}\right|,\left|\overline{u}\right|\leq A\exp\left(\alpha\ln^2S\right)=AS^{\alpha\ln S}, 
\label{eq:subsupgrcnd}
\end{equation}
 for a.a. $(S,t)\in(0,+\infty)\times [0,T]$.

Then $\underline{u}\leq\overline{u} $ for a.a. $(S,t)\in(0,+\infty)\times [0,T]$.
\end{theorem}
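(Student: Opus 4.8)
The plan is to mimic the structure of the classical comparison principle (Theorem~\ref{th:comppr}) but carried out in the weak formulation, using the test function $v=(\underline{u}-\overline{u}-\epsilon\omega)_+$ rather than an interior-minimum argument. First I would set $\tilde u := \underline{u}-\overline{u}$ and, exactly as in the proof of Lemma~\ref{lm:aux}, write the difference of the nonlinearities $\mathcal{F}[\tau;\underline{u},\gamma\underline{h}]-\mathcal{F}[\tau;\overline{u},\gamma\overline{h}]$ using the integral-in-$\xi$ identity \eqref{eq:difres}, with $u_\xi=\xi\underline{u}+(1-\xi)\overline{u}$ and $h_\xi=\xi\underline{h}+(1-\xi)\overline{h}$; here $\underline{h}\le\overline{h}$ so $\tilde h:=\gamma(\underline{h}-\overline{h})\le 0$. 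Subtracting the supersolution inequality from the subsolution inequality, for a.a.\ $\tau$ and for every nonnegative $v\in H^1_w$ one gets
\[
\langle\dot{\tilde u},v\rangle+a(\tilde u,v)\le -\nu_{01}\nu_{10}\!\int_0^1\!\!\int_0^\tau e^{u_\xi(\tau)-u_\xi(s)}\bigl(\tilde u(\tau)-\tilde u(s)\bigr)\,ds\,d\xi\,\langle\cdot\rangle_v-\nu_{01}\bigl(\tilde u(\tau)-\tilde h\bigr)\!\int_0^1 e^{u_\xi(\tau)-\gamma h_\xi}d\xi\,\langle\cdot\rangle_v,
\]
where $\langle\cdot\rangle_v$ denotes pairing against $wv$. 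The point is that both kernels $\int_0^1 e^{u_\xi(\tau)-u_\xi(s)}d\xi$ and $\int_0^1 e^{u_\xi(\tau)-\gamma h_\xi}d\xi$ are \emph{pointwise positive and, thanks to the growth bound \eqref{eq:subsupgrcnd}, locally bounded}, which will let me close the argument on a short time interval.

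Next I localize in time. Fix $T_1$ and the barrier $\omega$ from \eqref{eq:wdef}, choosing $T_1=\bar\tau$ with $0<\bar\tau<\min\{(2\sigma^2\alpha)^{-1},4/\sigma^2\}$ so that $\mathcal{L}_{BS}\omega=0$, $\omega$ is increasing in $\tau$ on $[0,T_1)$, and $\omega(S,\tau)\to+\infty$ as $|\ln S|\to\infty$ or $\tau\to T_1$ — the same $\omega$ used in Lemma~\ref{lm:aux}. Set $\varphi_\epsilon:=\tilde u+\epsilon\omega$ and take the test function $v:=(\,{-\varphi_\epsilon}\,)_+=(\overline{u}-\underline{u}-\epsilon\omega)_+\in H^1_w$ (it is admissible: it is nonnegative, and the growth bounds together with Lemma~\ref{lm:Est} and the decay of $\omega$ ensure it lies in $H^1_w$ with compact-in-$\ln S$ essential support). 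Because $\dot\varphi_\epsilon=\dot{\tilde u}+\epsilon\omega_\tau$ and $a(\omega,v)$ pairs against $\mathcal{L}_{BS}\omega=0$ up to the first-order weight terms, testing the displayed inequality with $v$ and using the chain rule $\langle\dot\varphi_\epsilon,(-\varphi_\epsilon)_+\rangle=-\tfrac12\frac{d}{d\tau}\|(-\varphi_\epsilon)_+\|_0^2$ together with $a(\varphi_\epsilon,(-\varphi_\epsilon)_+)\le -\alpha\|(-\varphi_\epsilon)_+\|_1^2+\beta\|(-\varphi_\epsilon)_+\|_0^2$ (semi-coercivity \eqref{eq:semicoerc}) gives a differential inequality of Gronwall type,
\[
\frac{d}{d\tau}\|(-\varphi_\epsilon)_+\|_0^2\le 2\beta\|(-\varphi_\epsilon)_+\|_0^2+2\!\int_{\{\varphi_\epsilon<0\}}\!\!w\,\bigl(\text{RHS terms}\bigr)(-\varphi_\epsilon)\,dS .
\]
On the set $\{\varphi_\epsilon<0\}$ one has $\tilde u(\tau)<\underline{h}\le\tilde h$ pointwise and, since $\omega$ is increasing in $\tau$, also $\tilde u(\tau)-\tilde u(s)<0$ for $s<\tau$ (after the standard reduction that $\varphi_\epsilon\ge\underline h$ holds on an initial slab, starting from $\varphi_\epsilon|_{\tau=0}=\tilde u(0)+\epsilon\omega\ge\tilde h\ge\underline h$ since $\tilde u(0)\le\tilde h$ forces... — more carefully, one runs the same induction on $[\tau_1,\tau_1+\bar\tau)$ as in Lemma~\ref{lm:aux}). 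Hence the RHS integrand has a sign that makes those extra terms $\le 0$, and Gronwall with zero initial data yields $\|(-\varphi_\epsilon)_+\|_0^2\equiv 0$ on $[0,T_1)$, i.e.\ $\tilde u\ge -\epsilon\omega$; letting $\epsilon\to 0$ gives $\tilde u\le 0$... wait, $\varphi_\epsilon\ge 0$ means $\underline u-\overline u\ge-\epsilon\omega$, the wrong sign — so instead I test with $v=(\varphi_\epsilon-\underline h)_-$-type cutoffs exactly as the classical proof does, tracking $\inf$; the clean formulation is: test with $v=(\underline h-\varphi_\epsilon)_+$ and show $\|(\underline h-\varphi_\epsilon)_+\|_0=0$, which after $\epsilon\to0$ and $\underline h\to 0$ (here $\underline h$ in the sense of the classical proof equals $\inf\tilde h\le 0$, and we want $\tilde u\le 0$, so actually the relevant one-sided bound is the \emph{upper} one, obtained symmetrically). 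Finally, iterate over $[\,0,\bar\tau],[\bar\tau,2\bar\tau],\dots$ covering $[0,T]$ as in the proof of Theorem~\ref{th:comppr}.

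The main obstacle will be the rigorous justification that $v=(\underline u-\overline u\mp\epsilon\omega)_+$ is a \emph{legitimate} test function in $H^1_w$ for a.a.\ $\tau$ and that the Lipschitz-chain-rule identity $\langle\dot u,(u)_\pm\rangle=\pm\tfrac12\frac{d}{d\tau}\|(u)_\pm\|_0^2$ applies in the Gelfand-triple setting — this requires $\varphi_\epsilon(\cdot,\tau)\in H^1_w$ with essential support bounded in $|\ln S|$ uniformly near each time slice, which is where the growth hypotheses \eqref{eq:subsupgrcnd}, the barrier blow-up of $\omega$, and Lemma~\ref{lm:Est} all get used together; and one must check that the kernels $\int_0^1 e^{u_\xi(\tau)-u_\xi(s)}\,d\xi$, $\int_0^1 e^{u_\xi(\tau)-\gamma h_\xi}\,d\xi$ are in $L^\infty_{loc}$ in $S$ with the right integrability against $w$ so that the RHS pairings make sense. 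Once the test function is admissible, the sign structure (positivity of the kernels, monotonicity of $\omega$) does all the work and the Gronwall step is routine.
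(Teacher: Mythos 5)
Your overall strategy is the paper's own: the $\xi$-integral decomposition of the nonlinearity, the barrier $\omega$ from \eqref{eq:wdef} on time slabs of length $\bar{\tau}<\min\left\{(2\sigma^2\alpha)^{-1},4/\sigma^2\right\}$, a truncated difference used as test function, semi-coercivity \eqref{eq:semicoerc} plus Gronwall, and induction over slabs. But as written there are two genuine gaps. First, the sign bookkeeping is never repaired. With $\tilde{u}=\underline{u}-\overline{u}$ and $\varphi_\epsilon=\tilde{u}+\epsilon\omega$, proving $(-\varphi_\epsilon)_+\equiv0$ gives $\underline{u}-\overline{u}\geq-\epsilon\omega$, hence $\underline{u}\geq\overline{u}$ after $\epsilon\to0$ --- the reverse of the claim, as you yourself notice mid-argument. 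The subsequent patch (testing with $(\underline{h}-\varphi_\epsilon)_+$, letting ``$\underline{h}\to0$'', ``obtained symmetrically'') is not a proof: here $\underline{h}$ is the given initial datum, not the constant $\gamma\inf(h_1-h_0)$ of the classical proof, and no inf-shift is needed in the weak setting because $\tilde{h}=\gamma(\overline{h}-\underline{h})\geq0$. The correct mirrored argument works with $u=\overline{u}-\underline{u}$, adds $\epsilon\omega$, and shows $u_{\epsilon-}:=\max\left\{-(u+\epsilon\omega),0\right\}\equiv0$ on each slab (equivalently, with your $\tilde{u}$, subtract $\epsilon\omega$ and kill $(\tilde{u}-\epsilon\omega)_+$); the monotonicity of $\omega$ in $\tau$ and the induction hypothesis on $[0,\tau_1]$ then give the favorable signs in both the memory term and the $(u(\tau)-\tilde{h})$-term, exactly as in \eqref{eq:ueineq}. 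None of this is carried out, so the Gronwall inequality you display controls the wrong quantity.

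Second, you flag the admissibility/chain-rule issue as ``the main obstacle'' but leave it unresolved, and it is not a routine check here: $\omega\notin L^2_w$, since it grows like $\exp\left(\ln^2S/(2\sigma^2(T_1-\tau))\right)$ while the weight only compensates powers of $S$ by \eqref{eq:wassump}; hence $\tilde{u}\pm\epsilon\omega$ is not an element of $W(0,T)$ and the identity $\langle\dot{\varphi}_\epsilon,(-\varphi_\epsilon)_+\rangle=-\tfrac12\frac{d}{d\tau}\|(-\varphi_\epsilon)_+\|_0^2$ cannot be invoked for it directly. The paper's localization is precisely the missing step: the growth bound \eqref{eq:subsupgrcnd} and the choice of $\bar{\tau}$ force the relevant negative part to vanish outside a fixed closed interval $I_\epsilon\subset(0,+\infty)$; one then multiplies by a smooth cutoff $\varphi$ with $\varphi\equiv1$ on $I_\epsilon$, so that $u_\epsilon\varphi\in L^2(\tau_1,\tau_1+\bar{\tau};H^1_w)$ and $(u_\epsilon\varphi)_-=u_{\epsilon-}$, and for nonnegative tests supported in $I_\epsilon$ one has $a(u\varphi,v)=a(u,v)$ while the extra terms $(2\omega^\prime\varphi^\prime+\omega\varphi^{\prime\prime},v)_{L^2_w}$ vanish because $\varphi^\prime,\varphi^{\prime\prime}$ are supported where $v=0$; only after this does $\mathcal{L}_{BS}\omega=0$ remove the barrier from the left-hand side and the time integration produce $\tfrac12\|u_{\epsilon-}(t)\|_0^2$ as in \eqref{eq:preestimate}. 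With these two repairs your argument coincides with the paper's proof; without them it does not establish the theorem.
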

Denote $u:=\overline{u}-\underline{u}$. We will prove that $u_-:=\max\left\{-u,0\right\}=0$ almost everywhere. 
Similarly to \eqref{eq:psidiff},  we obtain that the following inequality holds for a.a. $\tau\in(0,T)$ and for any nonegative $v\in H^1_w$ with compact support in $(0,+\infty)$:
\begin{align}\label{eq:udiffcomp}
\left\langle\dot{u},v\right\rangle+a(u,v)&\geq -\nu_{01}\nu_{10}\int_0^\infty \left(\int_0^\tau \delta(\tau,s)\left(u\left(S,\tau \right)-u\left(S,s\right)\right)ds \right)v(S)wdS\\
	&\qquad -\nu_{01}\int_0^\infty \left(u\left(S,\tau \right)-\tilde{h}(S)\right) v(S)\delta(\tau)wdS\nonumber,
\end{align}
where 
\[\delta(\tau,s):=\int_0^1e^{u_\xi(\tau )-u_\xi(s)}d\xi, \quad \delta(\tau):=\int_0^1e^{u_\xi(\tau )-\gamma h_\xi} d\xi,
\]
$u_\xi:=\xi \overline{u}+\left(1-\xi\right)\underline{u}$, $u(\cdot,0)\geq \tilde{h}:=\gamma\left(\overline{h}-\underline{h}\right)\geq 0$ and $h_\xi:=\xi \overline{h}+\left(1-\xi\right)\underline{h}$.

It is sufficient to prove the following auxiliary result:
\begin{lemma} Assume that  $\tau_1\geq0$ is such that for any $t\in[0,\tau_1]$ the inequality $\overline{u}(t)-\underline{u}(t)\geq 0$ holds a.e. on $(0,+\infty)$. Then the same inequality holds for any $t\in[0,\tau_1+\bar{\tau}]$, 
 where $\bar{\tau}>0$ is a constant which depends only on $\alpha$ and $\sigma$. 
\end{lemma}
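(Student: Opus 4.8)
The plan is to transplant the barrier argument of Lemma~\ref{lm:aux} into the weak framework, replacing the pointwise minimum principle by an energy estimate obtained from testing \eqref{eq:udiffcomp} against the negative part of a perturbed difference. Fix $T_1:=\tau_1+\bar{\tau}$ with $0<\bar{\tau}<\min\{(2\sigma^2\alpha)^{-1},4/\sigma^2\}$, exactly as in Lemma~\ref{lm:aux}, let $\omega$ be the function \eqref{eq:wdef} attached to this $T_1$, and for $0<\epsilon<1$ put $\varphi_\epsilon:=u+\epsilon\omega$ and $v_\epsilon:=(\varphi_\epsilon)_-=\max\{-\varphi_\epsilon,0\}$. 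The first step is to verify that $v_\epsilon(\tau)$ is an admissible test function in \eqref{eq:udiffcomp} for a.a. $\tau\in[\tau_1,T_1)$. Since $\omega$ is increasing in $\tau$ on $[\tau_1,T_1)$ and, by the choice of $T_1$, grows strictly faster than $A\exp(\alpha\ln^2S)$ as $|\ln S|\to\infty$, we have $\{v_\epsilon(\cdot,\tau)>0\}\subseteq\{-u(\cdot,\tau)\ge\epsilon\omega(\cdot,\tau_1)\}\subseteq K_\epsilon$, where $K_\epsilon:=\{S:A\exp(\alpha\ln^2S)\ge\epsilon\omega(S,\tau_1)\}$ is a fixed compact subset of $(0,+\infty)$, independent of $\tau$. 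On $K_\epsilon$ the weight $w$ is bounded above and away from zero and $\omega$ is smooth, so $v_\epsilon(\tau)$ is nonnegative, compactly supported and belongs to $H^1_w$; moreover $v_\epsilon(\tau_1)=0$ a.e., because $u(\tau_1)\ge0$ a.e. by hypothesis.

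The second step is the energy estimate. As $\omega$ solves $\omega_\tau-\tfrac12\sigma^2S^2\omega_{SS}=0$ classically, integration by parts against the compactly supported $v_\epsilon$ gives $(\omega_\tau,v_\epsilon)_{L^2_w}+a(\omega,v_\epsilon)=0$; hence, testing \eqref{eq:udiffcomp} with $v_\epsilon$ and writing $u=\varphi_\epsilon-\epsilon\omega$ on the left-hand side, one gets
\[
\langle\dot\varphi_\epsilon,v_\epsilon\rangle+a(\varphi_\epsilon,v_\epsilon)\ \ge\ -\nu_{01}\nu_{10}\!\int_0^\infty\!\Big(\int_0^\tau\delta(\tau,s)\big(u(\tau)-u(s)\big)ds\Big)v_\epsilon\,w\,dS\ -\ \nu_{01}\!\int_0^\infty\!\big(u(\tau)-\tilde h\big)v_\epsilon\,\delta(\tau)\,w\,dS .
\]
Since $v_\epsilon=-\varphi_\epsilon$ on $\{v_\epsilon>0\}$ and vanishes elsewhere, $a(\varphi_\epsilon,v_\epsilon)=-a(v_\epsilon,v_\epsilon)$, and the standard chain rule (applied on the compact set $K_\epsilon$) gives $\langle\dot\varphi_\epsilon,v_\epsilon\rangle=-\tfrac12\tfrac{d}{d\tau}\|v_\epsilon(\tau)\|_0^2$; combined with the semi-coercivity \eqref{eq:semicoerc}, discarding the nonnegative $\|v_\epsilon\|_1^2$ term, the left-hand side is at most $-\tfrac12\tfrac{d}{d\tau}\|v_\epsilon\|_0^2+\beta\|v_\epsilon\|_0^2$. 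On $\{v_\epsilon>0\}$ one has $u(\tau)\le-v_\epsilon(\tau)\le0\le\tilde h$, so the last term on the right is of the favorable sign and is discarded; in the memory term split $\int_0^\tau=\int_0^{\tau_1}+\int_{\tau_1}^\tau$: over $[0,\tau_1]$ the contribution is again of the favorable sign, since $u(s)\ge0$ there forces $u(\tau)-u(s)\le u(\tau)\le-v_\epsilon(\tau)$, while for $s\in[\tau_1,\tau]$ the monotonicity of $\omega$ gives $u(\tau)-u(s)\le\varphi_\epsilon(\tau)-\varphi_\epsilon(s)\le-v_\epsilon(\tau)+v_\epsilon(s)\le v_\epsilon(s)$. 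Using $\delta(\tau,s)=\int_0^1 e^{u_\xi(\tau)-u_\xi(s)}d\xi\le\sup_{S\in K_\epsilon}e^{2A\exp(\alpha\ln^2S)}=:C_\epsilon<\infty$ together with the Cauchy--Schwarz and Minkowski integral inequalities, one arrives, for $g(\tau):=\|v_\epsilon(\tau)\|_0^2$, at
\[
\tfrac12\,g'(\tau)\ \le\ \beta\,g(\tau)+\nu_{01}\nu_{10}C_\epsilon\Big(\int_{\tau_1}^\tau\sqrt{g(s)}\,ds\Big)\sqrt{g(\tau)}\ \le\ c_1\,g(\tau)+c_2\!\int_{\tau_1}^\tau g(s)\,ds ,
\]
for a.a. $\tau\in[\tau_1,T_1)$, with $g$ absolutely continuous and $g(\tau_1)=0$.

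The third step is a standard Gronwall argument (for instance applied to $\tau\mapsto g(\tau)+\int_{\tau_1}^\tau g$), which forces $g\equiv0$ on $[\tau_1,T_1)$, however large the finite constants $c_1,c_2$. Thus $v_\epsilon(\tau)=0$, i.e. $u(\tau)\ge-\epsilon\omega(\tau)$ a.e. in $S$, for every $\tau\in[\tau_1,T_1)$ and every $\epsilon\in(0,1)$; letting $\epsilon\to0$ gives $\overline u(\tau)-\underline u(\tau)=u(\tau)\ge0$ a.e. for $\tau\in[\tau_1,\tau_1+\bar{\tau})$, and together with the hypothesis on $[0,\tau_1]$ this yields the asserted inequality on $[0,\tau_1+\bar{\tau}]$ (the endpoint being recovered from $u\in C([0,T];L^2_w)$, or by shrinking $\bar{\tau}$ slightly), with $\bar{\tau}$ depending only on $\alpha$ and $\sigma$. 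The main obstacle I anticipate is not conceptual but the careful bookkeeping of the first two steps: showing that $v_\epsilon$ is genuinely an admissible test function --- compactly supported and in $H^1_w$ for a.a. $\tau$, with the support bound uniform in $\tau$ --- and justifying the chain-rule identity $\langle\dot\varphi_\epsilon,v_\epsilon\rangle=-\tfrac12\tfrac{d}{d\tau}\|v_\epsilon\|_0^2$ for $\varphi_\epsilon=u+\epsilon\omega$, which is not globally an element of $W(0,T)$ since $\omega\notin L^2_w$; both are handled by localizing everything to $K_\epsilon$, on which $w$ is two-sided bounded and $\omega$ smooth, so that the Lions--Magenes theory applies verbatim.
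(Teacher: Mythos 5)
Your argument is correct and follows essentially the same route as the paper's proof: the same barrier $\omega$ with the same choice of $\bar{\tau}$, localization of the negative part of $u+\epsilon\omega$ to a compact set via the growth condition \eqref{eq:subsupgrcnd}, testing \eqref{eq:udiffcomp} with that negative part, semi-coercivity plus boundedness of the exponential factors $\delta$ on the compact set, and a Gronwall argument yielding $\|v_\epsilon(t)\|_0\equiv 0$, followed by $\epsilon\to 0$. The only differences are cosmetic (you discard the $[0,\tau_1]$ memory term and the $\tilde h$ term by sign, where the paper keeps them inside the bounded kernel $\Sigma$, and you localize directly to $K_\epsilon$ instead of via the explicit cutoff $\varphi$), so no further changes are needed.
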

\begin{proof} Let $\omega$ be defined by \eqref{eq:wdef} and $u_\epsilon:=u+\epsilon\omega$ where $u=\overline{u}-\underline{u}$. Then, assume that $\bar{\tau}$ is chosen as in the proof of Lemma \ref{lm:aux}. We will prove that $u_{\epsilon-}:=\max\left\{-u_\epsilon,0\right\}\equiv 0$ for a.a. $(S,t)\in(0,+\infty)\times [\tau_1,\tau_1+\bar{\tau}]$. Note that  there exist a closed interval $I_\epsilon\subset(0,+\infty)$ such that $u_{\epsilon-}=0$ on the set $\left((0,+\infty)\setminus I_\epsilon\right)\times [\tau_1,\tau_1+\bar{\tau}]$ due to the conditions \eqref{eq:subsupgrcnd}. Now, let $\varphi(S)$ be a smooth function with compact support in $(0,+\infty)$ such that $\varphi(S)=1$ on the interval $I_\epsilon$. Then $u_\epsilon \varphi\in L^2(\tau_1,\tau_1+\bar{\tau};H^1_w)$ and $\left(u_\epsilon \varphi\right)_-=u_{\epsilon-}$. Next, for any nonnegative $v\in H^1_w$ with compact support $\supp v\subset I_\epsilon$ we have $\varphi v=v$, $a(u\varphi,v)=a(u,v)$ and then
\begin{align}
&\left\langle \frac{d}{d\tau}\left(u_\epsilon \varphi\right),v\right\rangle+a\left(u_\epsilon \varphi,v\right)=\left\langle\dot{u},\varphi v\right\rangle+\epsilon\left\langle \varphi\dot{\omega},v\right\rangle +a(u\varphi,v)+\epsilon a(\omega\varphi,v)\label{eq:ueps}\\
&\quad=\left\langle\dot{u},v\right\rangle+a(u,v)-\frac{1}{2}\epsilon \sigma^2\underbrace{(2\omega^\prime\varphi^\prime+\omega\varphi^{\prime\prime},v)_{L^2_w}}_{=0}\nonumber\\
&\quad\geq -\nu_{01}\nu_{10}\int_0^\infty \left(\int_0^\tau \delta(\tau,s)\left(u\left(S,\tau \right)-u\left(S,s\right)\right)ds \right)v(S)wdS\\
	&\qquad -\nu_{01}\int_0^\infty \left(u\left(S,\tau \right)-\tilde{h}(S)\right) v(S)\delta(\tau)wdS\nonumber\\
	&\quad\geq -\nu_{01}\nu_{10}\int_0^\infty \left(\int_0^{\tau_1} \delta(\tau,s)ds \right)u\left(S,\tau \right)v(S)wdS\\
	&\quad\quad -\nu_{01}\nu_{10}\int_0^\infty \left(\int_{\tau_1}^\tau \delta(\tau,s)\left(u\left(S,\tau \right)-u\left(S,s\right)\right)ds \right)v(S)wdS\nonumber\\
	&\qquad -\nu_{01}\int_0^\infty u\left(S,\tau \right) v(S)\delta(\tau)wdS,\nonumber
\end{align}
i.e.,
\begin{align}
&\left\langle \frac{d}{d\tau}\left(u_\epsilon \varphi\right),v\right\rangle+a\left(u_\epsilon \varphi,v\right)\geq -\nu_{01}\nu_{10}\int_0^\infty \left(\int_0^{\tau_1} \delta(\tau,s)ds \right)u_\epsilon\left(S,\tau \right)v(S)wdS\label{eq:ueineq}\\
	&\quad\quad -\nu_{01}\nu_{10}\int_0^\infty \left(\int_{\tau_1}^\tau \delta(\tau,s)\left(u_\epsilon\left(S,\tau \right)-u_\epsilon\left(S,s\right)\right)ds \right)v(S)wdS\nonumber\\
	&\qquad -\nu_{01}\int_0^\infty u_\epsilon\left(S,\tau \right) v(S)\delta(\tau)wdS,\nonumber
\end{align}
where we have used the fact that $u_\epsilon>u$ and $u_\epsilon\left(S,\tau \right)-u_\epsilon\left(S,s\right)>u\left(S,\tau \right)-u\left(S,s\right)$ for any $s\in[\tau_1,\tau]$ since $\omega(S,\cdot)$ is increasing on that interval. Now, take $v=u_{\epsilon-}$ and note that $u_\epsilon=u_{\epsilon+}-u_{\epsilon-}$,  $a\left(u_\epsilon \varphi,u_{\epsilon-}\right)=-a\left(u_{\epsilon-} ,u_{\epsilon-}\right)$ and 
\[u_\epsilon\left(S,s\right)u_{\epsilon-}\left(S,\tau\right)\geq -u_{\epsilon-}\left(S,s\right)u_{\epsilon-}\left(S,\tau\right)\geq -\frac{1}{2}\left(u_{\epsilon-}^2\left(S,s\right)+u_{\epsilon-}^2\left(S,\tau\right)\right). 
\]
After integration with respect to $\tau$ form $\tau_1$ to $t\in[\tau_1,\tau_1+\bar{\tau}]$ the inequality \eqref{eq:ueineq} implies
\begin{equation}\label{eq:preestimate}
\frac{1}{2}\left\|u_{\epsilon-}(t)\right\|_0^2+a\left(u_{\epsilon-} ,u_{\epsilon-}\right)\leq -\int_{\tau_1}^t\left(\int_0^\infty \Sigma(S,\tau)u_{\epsilon-}^2\left(S,\tau \right)wdS\right)d\tau,
\end{equation}
where 
\[\Sigma(S,\tau):=\nu_{01}\nu_{10}\left(\int_0^{\tau_1} \delta(\tau,s)ds +\frac{1}{2}\int_{\tau_1}^\tau\delta(\tau,s)ds-\frac{1}{2}\int_{\tau}^t \delta(s,\tau)ds\right)+\nu_{01}\delta(\tau).
\]
$\left|\Sigma(S,\tau)\right|$ is bounded from above by a constant, say $C>0$, when $S\in I_\epsilon$  and due to the semi-coercivity of the bilinear form $a(\cdot,\cdot)$ (see \eqref{eq:semicoerc}) we obtain:
\begin{equation}
\frac{1}{2}\left\|u_{\epsilon-}(t)\right\|_0^2\leq \left(C+\beta\right)\int_{\tau_1}^t\left\|u_{\epsilon-}(\tau)\right\|_0^2d\tau.
\label{eq:GronwallEstimate}
\end{equation}
Hence the Gronwall inequality implies $\left\|u_{\epsilon-}(t)\right\|_0=0$ for any $t\in[\tau_1,\tau_1+\bar{\tau}]$ since $\left\|u_{\epsilon-}(\tau_1)\right\|_0=0$. Then $u+\epsilon\omega\geq 0$ a.e. Thus $u\geq 0$ a.e. since $\epsilon>0$ is arbitrary. 
\end{proof} 
We further prove another useful estimate.
\begin{lemma}  There exists a constant $C>0$ such that 
\begin{equation}
\max_{t\in[0,T]} \left\|u(t)\right\|_0+\left\|u\right\|_{L^2(0,T,H^1_w)}\leq C\left(\left\|u(0)\right\|_0+\left\|\hat{u}\right\|_{W(0,T)}+\gamma\left\|h\right\|_0+1\right)
\label{eq:Aprioriest1}
\end{equation} 
for any  weak subsolution $u$ and any function $\hat{u}\in W(0,T)$ satisfying $u\geq \hat{u}$.
\end{lemma}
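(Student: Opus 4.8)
The plan is an energy estimate in which the auxiliary function $\hat u$ serves only to generate an admissible nonnegative test function. Set $z:=u-\hat u$. By hypothesis $z\ge 0$ a.e., and since $u,\hat u\in W(0,T)$ also $z\in W(0,T)$; hence $z(\tau)\in H^1_w$ for a.a. $\tau$, and by the embedding $W(0,T)\subset C([0,T],L^2_w)$ together with the associated integration-by-parts formula (Lions--Magenes \cite{LionsMag}) one has $\langle\dot z,z\rangle=\tfrac12\tfrac{d}{d\tau}\|z\|_0^2$ in $\mathcal{D}'(0,T)$. For a.a. $\tau$ I would test the subsolution inequality \eqref{eq:SupSoldef} with $v=z(\tau)$ and split $u=z+\hat u$ both in $a(\cdot,\cdot)$ (by bilinearity) and in $\langle\dot u,\cdot\rangle$, obtaining
\[
\tfrac12\tfrac{d}{d\tau}\|z\|_0^2+a(z,z)\le \int_0^{+\infty}w\,\mathcal F[\tau;u,\gamma h]\,z\,dS-\langle\dot{\hat u},z\rangle-a(\hat u,z).
\]

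On the right I would use the pointwise bound $\mathcal F[\tau;u,\gamma h]\le\kappa$ (the discarded terms are $\le 0$ since $\nu_{01},\nu_{10}\ge 0$ and the exponentials are positive), so that $\int_0^{+\infty}w\,\mathcal F\,z\,dS\le\kappa^{+}\theta^{1/2}\|z\|_0$ by Cauchy--Schwarz and \eqref{eq:wassumpintegr}; if one instead prefers to retain the $e^{-\gamma h}$ contribution it is absorbed into the $\gamma\|h\|_0$ slack appearing on the right-hand side of \eqref{eq:Aprioriest1}. For the remaining terms I would use $|\langle\dot{\hat u},z\rangle|\le\|\dot{\hat u}\|_{H^*_w}\|z\|_1$ and, by continuity \eqref{eq:acont}, $|a(\hat u,z)|\le c\,\|\hat u\|_1\|z\|_1$. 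Invoking the semi-coercivity \eqref{eq:semicoerc} and Young's inequality to absorb every factor of $\|z\|_1$ into $\tfrac{\alpha}{2}\|z\|_1^2$ then gives, for suitable constants,
\[
\tfrac12\tfrac{d}{d\tau}\|z\|_0^2+\tfrac{\alpha}{2}\|z\|_1^2\le c_1\|z\|_0^2+c_2\bigl(\|\dot{\hat u}\|_{H^*_w}^2+\|\hat u\|_1^2\bigr)+c_3\bigl(\kappa^2+\gamma^2\|h\|_0^2+1\bigr).
\]

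Integrating over $(0,t)$ and discarding the nonnegative term $\int_0^t\|z\|_1^2\,d\tau$ on the left, Gronwall's inequality applied to $t\mapsto\|z(t)\|_0^2$ yields $\max_{[0,T]}\|z(t)\|_0^2\le C\bigl(\|z(0)\|_0^2+\|\hat u\|_{W(0,T)}^2+\gamma^2\|h\|_0^2+1\bigr)$, using $\int_0^T(\|\dot{\hat u}\|_{H^*_w}^2+\|\hat u\|_1^2)\,d\tau\le\|\hat u\|_{W(0,T)}^2$; feeding this back into the integrated inequality bounds $\int_0^T\|z\|_1^2\,d\tau$ by the same quantity. Finally, writing $u=z+\hat u$, using $\|z(0)\|_0\le\|u(0)\|_0+\|\hat u(0)\|_0$ and the continuous embedding $W(0,T)\hookrightarrow C([0,T],L^2_w)$ (so that $\|\hat u(0)\|_0$ and $\max_{[0,T]}\|\hat u(t)\|_0$ are $\le C\|\hat u\|_{W(0,T)}$, while $\|\hat u\|_{L^2(0,T;H^1_w)}\le\|\hat u\|_{W(0,T)}$), one collects constants to obtain \eqref{eq:Aprioriest1}. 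The main obstacle is not conceptual but organizational: keeping careful track of the Young-inequality constants so that the coercive part $\tfrac{\alpha}{2}\|z\|_1^2$ genuinely absorbs all $\|z\|_1$-contributions coming from the $\hat u$ cross-terms, and so that every residual constant is controlled by $\|\hat u\|_{W(0,T)}$, $\gamma\|h\|_0$ and an absolute constant; the chain rule $\langle\dot z,z\rangle=\tfrac12\tfrac{d}{d\tau}\|z\|_0^2$ is the standard $W(0,T)$ lemma already invoked and poses no difficulty.
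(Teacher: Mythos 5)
Your argument is correct, and it follows the same skeleton as the paper's proof (test the subsolution inequality \eqref{eq:SupSoldef} with the nonnegative function $u-\hat u$, use \eqref{eq:acont}--\eqref{eq:semicoerc} and Gronwall), but it handles the nonlinearity differently and, in fact, more simply. The paper does not discard the reaction term: it linearizes $\mathcal F$ via $e^x\ge 1+x$ (mirroring the computation \eqref{eq:difres}), which produces the intermediate bound \eqref{eq:EstAbv} with the signed terms $-\nu_{01}(\nu_{10}\tau+1)\|u\|_0^2$ and the memory term $\nu_{01}\nu_{10}\tfrac12\bigl\|\int_0^t u\,d\tau\bigr\|_0^2$, and it runs the energy estimate on $u$ itself, so the cross terms with $\hat u$ appear through $\left.(u,\hat u)_{L^2_w}\right|_0^t$, $a(u,\hat u)$ and $\int_0^t\langle\dot{\hat u},u\rangle\,d\tau$; this is also where $\gamma\|h\|_0$ enters. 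You instead exploit that the whole exponential term in $\mathcal F$ is nonpositive, so $\mathcal F[\tau;u,\gamma h]\le\kappa$ pointwise, and since the test function $z=u-\hat u$ is nonnegative this crude bound suffices together with \eqref{eq:wassumpintegr}; running the estimate on $z$ rather than $u$ then reduces everything to the standard linear-parabolic Gronwall argument, avoids the memory term entirely, and yields a bound that does not even need the $\gamma\|h\|_0$ slack, so \eqref{eq:Aprioriest1} follows a fortiori after converting back to $u$ via the embedding $W(0,T)\subset C([0,T],L^2_w)$. The trade-off is only that the paper's sharper linearized form keeps the same structure used in the comparison-principle proofs (and retains the helpful negative $\|u\|_0^2$ terms), whereas your version is shorter and more elementary; both are valid proofs of the stated estimate.
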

\begin{proof} Let $v\in H^1_w$ be some nonnegative function. We have
\begin{align}
\left\langle\dot{u},v\right\rangle+a(u,v)&\leq \int_0^{+\infty}w\mathcal{F}\left[\tau;u,\gamma h\right]vdS,\nonumber
\\
&\leq -\nu_{01}\nu_{10}\left(\int_0^\tau \left[u(\tau )-u(s)\right]ds\ ,\ v\right)_{L^2_w}\label{eq:EstAbv}\\
&\quad-\nu_{01}\left(u(\tau )-\gamma h\ ,\ v\right)_{L^2_w}\nonumber\\
&\quad+\left(\kappa-\nu_{01}\nu_{10}\tau-\nu_{01}\right)\left(1\ ,\ v\right)_{L^2_w}.\nonumber
\end{align}
Take $v=u-\hat{u}$ and integrate \eqref{eq:EstAbv} with respect to $\tau$ from $0$ to $t$. 
\begin{align}
\frac{1}{2}\left\|u(t)\right\|_0^2+a(u,u)&\leq \frac{1}{2}\left\|u(0)\right\|_0^2+\left.\left(u,\hat{u}\right)_{L^2_w}\right|_0^t+a(u,\hat{u})-\int_0^t\left\langle\dot{\hat{u}},u\right\rangle d\tau\\
&\quad-\nu_{01}\int_0^t\left(\nu_{10}\tau+1\right)\left\|u(\tau)\right\|_0^2d\tau+\nu_{01}\nu_{10}\frac{1}{2}\left\|\int_0^tu(\tau )d\tau\right\|_0^2\nonumber
\\
&\quad+C_1\left(\left\|\hat{u}\right\|_{L^2(0,t,L^2_w)}+\gamma\left\|h\right\|_0+1\right)\left\|u\right\|_{L^2(0,t,L^2_w)}\nonumber\\
&\quad+C_2\left(\gamma\left\|h\right\|_0+1\right)\left\|\hat{u}\right\|_{L^2(0,t,L^2_w)}.\nonumber
\end{align}
Then a standard argument implies the estimate \eqref{eq:Aprioriest1}.
\end{proof}

Now, we prove the existence of weak solutions, provided that $h\in H^1_w$. The proof is based on the lower and upper solution method (cf. \cite{Pao}). However, the exponential nonlinearity in \eqref{eq:psintdiff} causes some very technical difficulties which have to be overcome.  
\begin{theorem} \label{th:Exist1} Assume that $h\in H^1_w$. Then there exist a
weak solution $u$ to the initial value problem \eqref{eq:psintdiff}. Moreover, there exists a constant $C>0$ independent of $u$ such that 
\begin{equation}
\left\|\dot{u}\right\|_{L^2(0,T,L^2_w)}+\left\|u\right\|_{L^\infty(0,T,H^1_w)}\leq C\left(\left\|u(0)\right\|_1+1\right)
\label{eq:Aprioriest2}
\end{equation} 
%for any weak solution $u$ satisfying $u(0)\in H^1_w$.
\end{theorem}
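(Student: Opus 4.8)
The plan is to construct the weak solution by the monotone iteration (lower–upper solution) method, using the comparison principle from Theorem~\ref{th:weakCompPr} to produce an ordered pair of barriers. First I would exhibit an explicit subsolution $\underline{u}$ and supersolution $\overline{u}$ in $W(0,T)$ bracketing the expected solution: since $h\in H^1_w$ and the nonlinearity $\mathcal{F}[\tau;u,\gamma h]$ is of the form $-\nu_{01}e^{u}(\nu_{10}\int_0^\tau e^{-u}ds+e^{-\gamma h})+\kappa$, one can look for barriers of the shape $\gamma h(S)+g_\pm(\tau)$ where $g_\pm$ solve suitable ODEs chosen so that the parabolic term $\mathcal{L}(\gamma h)$ and the exponential terms are dominated with the correct sign. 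Because $\mathcal{L}(\gamma h)$ need not be bounded for the call payoff, I would instead take the barriers to solve an associated linear(ized) problem: e.g. $\overline{u}$ solving $\dot{\overline u}+a(\overline u,\cdot)= \kappa$ with $\overline u(0)=\gamma h$ (pure heat equation with source), which is solvable in $W(0,T)$ by the Lions–Magenes theory thanks to the continuity and semi-coercivity \eqref{eq:acont}–\eqref{eq:semicoerc}; and $\underline u$ solving a similar linear problem with the nonlinearity replaced by a lower bound valid on the range dictated by $\overline u$. The a priori bound \eqref{eq:Aprioriest1} and the growth-condition machinery guarantee the barriers satisfy \eqref{eq:subsupgrcnd}, so Theorem~\ref{th:weakCompPr} gives $\underline u\le\overline u$.

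Next I would set up the iteration. Freezing the argument of the nonlinearity, define for $w\in[\underline u,\overline u]$ the map $w\mapsto Tw=:u$ where $u$ solves the linear problem
\[
\langle\dot u,v\rangle+a(u,v)+\lambda(u,v)_{L^2_w}=\big(\mathcal{F}[\tau;w,\gamma h]+\lambda w,\,v\big)_{L^2_w},\qquad u(0)=\gamma h,
\]
for a large constant $\lambda$ chosen so that $w\mapsto\mathcal{F}[\tau;w,\gamma h]+\lambda w$ is monotone on the order interval. The subtlety is that $\mathcal{F}$ depends on $w$ nonlocally in time through $\int_0^\tau e^{-w(s)}ds$; I would handle this by checking that on the order interval the map $w\mapsto\mathcal F[\tau;w,\gamma h]$ is still \emph{monotone} in the pointwise-in-$(S,\tau)$ order (increasing $w(S,\cdot)$ on $[0,\tau]$ increases $e^{w(\tau)}$ and decreases $\int_0^\tau e^{-w(s)}ds$, so after adding $\lambda w$ with $\lambda$ large it is monotone), exactly as in the integration-by-parts identity behind \eqref{eq:udiffcomp}. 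Starting from $u_0=\overline u$ and iterating produces a decreasing sequence $\{u_n\}$ (and similarly an increasing one from $\underline u$), each $u_n\in W(0,T)$ with bounds uniform in $n$ by \eqref{eq:Aprioriest1}; hence $u_n\to u$ weakly in $L^2(0,T;H^1_w)$ and, using \eqref{eq:Aprioriest1}/compactness (Aubin–Lions, via $W(0,T)\hookrightarrow C([0,T];L^2_w)$), strongly in $L^2(0,T;L^2_w)$ and a.e., so one can pass to the limit in the weak formulation and identify $u$ as a weak solution.

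For the estimate \eqref{eq:Aprioriest2}: once $u$ is known to lie between the two $W(0,T)$ barriers which themselves obey \eqref{eq:Aprioriest1} with right-hand side controlled by $\|u(0)\|_1+1$, the exponential terms $e^{u}$, $e^{-u}$, $e^{-\gamma h}$ appearing in $\mathcal F$ are \emph{not} directly $L^2_w$-bounded — this is the "technical difficulty" alluded to — but they are bounded by $\exp(C|\ln S|)\cdot(\text{const})$ via Lemma~\ref{lm:Est} applied to the barriers, and multiplied by the weight $w=(1+S)^\gamma$ with $\gamma<-3$ they are integrable by \eqref{eq:wassumpintegr}; thus $\mathcal F[\tau;u,\gamma h]\in L^2(0,T;L^2_w)$ with norm $\le C(\|u(0)\|_1+1)$. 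Testing the equation with $\dot u$ (after the standard Galerkin/regularization step that makes this legitimate) and using semi-coercivity \eqref{eq:semicoerc} then yields $\|\dot u\|_{L^2(0,T;L^2_w)}+\|u\|_{L^\infty(0,T;H^1_w)}\le C(\|u(0)\|_1+1)$.

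The main obstacle, as flagged in the paper, is controlling the exponential nonlinearity in the weighted spaces: showing that $\mathcal F[\tau;u,\gamma h]$ is genuinely an element of $L^2(0,T;L^2_w)$ (not merely of some larger space) for $u$ ranging over the order interval, which is precisely where the Gaussian-type growth bound $\exp(\alpha\ln^2 S)$ of the barriers, the pointwise estimate of Lemma~\ref{lm:Est}, and the integrability assumption \eqref{eq:wassumpintegr} on $w$ must be combined carefully; a secondary difficulty is arranging the constant $\lambda$ and verifying monotonicity of the time-nonlocal map so that the iteration is well defined and ordered.
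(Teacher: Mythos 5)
Your overall strategy (monotone iteration between ordered barriers, comparison principle, passage to the limit) matches the skeleton of the paper's Step~1, but two of your key steps have genuine gaps. First, the barrier construction for general unbounded $h\in H^1_w$ does not go through as described. The supersolution $\overline{u}$ solving the linear problem with source $\kappa$ is fine (since $\mathcal{F}\leq\kappa$), but the subsolution is not: $\mathcal{F}[\tau;u,\gamma h]$ contains $-\nu_{01}e^{u(\tau)}e^{-\gamma h}$ and $-\nu_{01}\nu_{10}e^{u(\tau)}\int_0^\tau e^{-u(s)}ds$, which admit no useful lower bound on the range dictated by a heat-evolved $\overline{u}$, because $\overline{u}(S,\tau)-\gamma h(S)$ is not bounded in $S$ for unbounded $h$; for the same reason the constant $\lambda$ making the (nonlocal) nonlinearity monotone cannot be chosen uniformly. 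The paper avoids this by a three-stage approximation: for \emph{bounded} $h$ the constant-in-$S$ barriers $\pm(c_0+Mt)$ work and the iteration constant $N$ exists; then $h$ bounded below is reached by cutting off with $\xi_\epsilon$ and using the comparison principle (Theorem~\ref{th:weakCompPr}) to get a monotone family $u_\epsilon$; finally general $h$ via $u_N(S,0)=\max\{\gamma h,-N\}$. In these limit passages the identification of the weak limit of the nonlinearity is done via its one-sided bound $\mathcal{F}\leq\kappa$, Fatou's lemma, and the local uniform bounds from Lemma~\ref{lm:Est} — steps your compactness argument would still have to supply.

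Second, and more seriously, your route to the estimate \eqref{eq:Aprioriest2} relies on the claim that $\mathcal{F}[\cdot;u,\gamma h]\in L^2(0,T;L^2_w)$ with norm $\leq C(\left\|u(0)\right\|_1+1)$, deduced from Lemma~\ref{lm:Est} and \eqref{eq:wassumpintegr}. This is false in general: Lemma~\ref{lm:Est} bounds $|u|$ (by roughly $\left\|u\right\|_1S^{-1/2}e^{C|\ln S|/2}$), not $e^{|u|}$, and the exponential of such a function is not square-integrable against $w=(1+S)^{\gamma}$; already for the call payoff $e^{u}\sim e^{\gamma S}$ overwhelms any polynomial weight. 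The paper's essential trick is precisely to avoid any $L^2_w$ bound on $\mathcal{F}$: testing the equation with $\dot u$, it uses the identity $\frac{d}{d\tau}\mathcal{F}[\tau;u,\gamma h]=\mathcal{F}[\tau;u,\gamma h]\,\dot u-\kappa\dot u-\nu_{01}\nu_{10}$, so that the dangerous term $(\mathcal{F},\dot u)_{L^2_w}$ becomes the integral of a total time derivative plus harmless terms, and $\int_0^t\frac{d}{d\tau}\mathcal{F}\,d\tau=\mathcal{F}[t]-\mathcal{F}[0]\leq\nu_{01}$ by the one-sided bound $\mathcal{F}\leq\kappa$, with all remaining contributions controlled by $\theta=\int_0^{+\infty}w\,dS<\infty$ (see \eqref{eq:ussudot}–\eqref{eq:intfot}). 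Without this structural observation (or a substitute for it), your energy estimate does not close, and since \eqref{eq:Aprioriest2} is also what drives the limit passages for unbounded $h$, the gap propagates to the existence part as well.
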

\begin{proof}
We will present the proof in several steps.

\noindent \textbf{Step 1.} \textit{Let $h\in L^2_w$ be bounded. Then there exists a weak solution $u$ to the initial value problem \eqref{eq:psintdiff}. In addition, if $u(0)=\gamma h\in H^1_w$, then the inequality \eqref{eq:Aprioriest2} holds with a constant $C$ independent of $u(0)$.} 

Note that we can conctruct appropriate couple of a supersolution $\overline{u}$ and a subsolution $\underline{u}$. Indeed, let the constant $c_0$ be such that $\left|\gamma h\right|\leq c_0$ and take $\underline{u}:=-c_0-Mt$ for some positive constant $M$. If $M$ is great enough then $\underline{u}$ is a subsolution. Analogously, $\overline{u}:=c_0+Mt$ is a supersolution provided that $M\geq \kappa$. Next, according to \eqref{eq:difres} we can choose a constant $N>0$ such that 
\[Nu(\tau)+\mathcal{F}\left[\tau;u,\gamma h\right]=Nu(\tau)-\nu_{01}e^{u(\tau )}\left(\nu_{10}\int_0^\tau e^{-u(s)}ds+e^{-\gamma h}\right)+\kappa
\] 
is increasing in $u$, i.e. \[Nu_1(\tau)+\mathcal{F}\left[\tau;u_1,\gamma h\right]\geq Nu_0(\tau)+\mathcal{F}\left[\tau;u_0,\gamma h\right],\]
for all $u_0$ and $u_1$ such that $\underline{u}\leq u_0\leq u_1 \leq \overline{u}$. Now, we can  construct a decreasing sequence of supersolutions $u_0:=\overline{u}, u_1, u_2, ...$ such that $u_{n+1}$ is the solution of the initial value problem
\[\left\{\begin{array}{l}
\dot{u}_{n+1}-\frac{1}{2}\sigma^2S^2u^{\prime\prime}_{n+1,SS}+Nu_{n+1}=Nu_n+\mathcal{F}\left[\tau;u_n,\gamma h\right],\\	
	u_{n+1}(S,0)=\gamma h(S)
\end{array}\right.
\]
and $\underline{u}\leq u_n \leq \overline{u}$. A standard argument implies that $u_n$ converges to a weak solution of the problem \eqref{eq:psintdiff}. We omit the details.

Next, assume in addition that $h\in H^1_w$. Then $\dot{u}\in L^2(0,T;L^2_w)$ and $u\in L^\infty(0,T;H^1_w)$ (see, e.g., Bonnans \cite{Bon}) and the following parabolic estimate holds:
\[
\left\|\dot{u}\right\|_{L^2(0,T,L^2_w)}+\left\|u\right\|_{L^\infty(0,T,H^1_w)}\leq c_0\left(\left\|u(0)\right\|_1+\left\|\mathcal{F}\left[\cdot;u,\gamma h\right]\right\|_{L^2(0,T,L^2_w)}\right)
\]
We will prove the stronger estimate \eqref{eq:Aprioriest2}. First, we have 
\begin{align}
	-\frac{1}{2}\sigma^2S^2u^{\prime\prime}_{SS}&=\mathcal{F}\left[\tau;u,\gamma h\right]-\dot{u}\in L^2(0,T,L^2_w),\\
	-\frac{1}{2}\sigma^2\int_0^t\left(S^2u^{\prime\prime}_{SS},\dot{u}\right)_{L^2_w}d\tau&=\frac{1}{2}\sigma^2\left(\frac{1}{2}\left\|u(t)\right\|_1^2-\frac{1}{2}\left\|u(0)\right\|_1^2\right)\label{eq:ussudot}\\
	&\quad+\frac{1}{2}\sigma^2\int_0^t\left(S\left(S\frac{w^\prime}{w}+2\right)u^\prime_S-u,\dot{u}\right)_{L^2_w}d\tau\nonumber
	\end{align}
	\begin{align}
	\int_0^t\left(\mathcal{F}\left[\tau;u,\gamma h\right],\dot{u}\right)_{L^2_w}d\tau&=\int_0^{+\infty}\left(\int_0^t\frac{d}{d\tau}\left(\mathcal{F}\left[\tau;u,\gamma h\right]\right)d\tau\right)\:wdS\\
	&\quad+\int_0^{+\infty}\left(\int_0^t\left(\kappa  \dot{u} + \nu_{01}\nu_{10}\right)d\tau  \right)\:wdS \nonumber\\
	&\leq \left|\kappa\right|\theta^{1/2}\int_0^t\left\|\dot{u}\left(\tau\right)\right\|_0d\tau+\nu_{01}\left(1+\nu_{10}t\right)\theta\label{eq:Fudot}
\end{align}
since
\begin{align}
	\frac{d}{d\tau}\left(\mathcal{F}\left[\tau;u,\gamma h\right]\right)&=\frac{d}{d\tau}\left[-\nu_{01}e^{u(\tau )}\left(\nu_{10}\int_0^\tau e^{-u(s)}ds+e^{-\gamma h}\right)+\kappa\right]\nonumber\\
	&=-\nu_{01}e^{u(\tau )}\left(\nu_{10}\int_0^\tau e^{-u(s)}ds+e^{-\gamma h}\right)\dot{u}-\nu_{01}\nu_{10}\\
	&=\mathcal{F}\left[\tau;u,\gamma h\right]\dot{u}-\kappa\dot{u}-\nu_{01}\nu_{10}.
\end{align}
and
\begin{equation}
\int_0^t\frac{d}{d\tau}\left(\mathcal{F}\left[\tau;u,\gamma h\right]\right)d\tau=\mathcal{F}\left[t;u,\gamma h\right]-\mathcal{F}\left[0;u,\gamma h\right]\leq \nu_{01}
\label{eq:intfot}
\end{equation} 
We multiply  both sides of the equation $\dot{u}-1/2\sigma^2S^2u^{\prime\prime}_{SS}=\mathcal{F}\left[\tau;u,\gamma h\right]$ with $\dot{u}$ in $L^2_w$ and integrate from $0$ to $T$. Then \eqref{eq:ussudot} and \eqref{eq:Fudot} imply
\begin{align}
	\int_0^t\left\|\dot{u}\right\|_0^2d\tau+\frac{1}{4}\sigma^2\left\|u(t)\right\|_1^2&\leq-\frac{1}{2}\sigma^2\int_0^t\left(S\left(S\frac{w^\prime}{w}+2\right)u^\prime_S-u,\dot{u}\right)_{L^2_w}d\tau\\
	&\qquad+\left|\kappa\right|\theta^{1/2}\int_0^t\left\|\dot{u}\left(\tau\right)\right\|_0d\tau+\frac{1}{4}\sigma^2\left\|u(0)\right\|_1^2\nonumber\\
&\qquad\;+\nu_{01}\left(1+\nu_{10}t\right)\theta\nonumber\\
&\leq \tilde{C}\left[\int_0^t\left(\left\|u(\tau)\right\|_1+1\right)\left\|\dot{u}(\tau)\right\|_0d\tau+\left\|u(0)\right\|_1^2+1\right]\nonumber
\end{align}
for some constant $\tilde{C}>0$. Now, a techical, but standard argument implies that \eqref{eq:Aprioriest2} holds.

\noindent \textbf{Step 2.} \textit{Let $h\in H^1_w$ be bounded from below, i.e., $u(0)=\gamma h\geq c$. Then there exists a weak solution $u$ to the initial value problem \eqref{eq:psintdiff}. In addition, the inequality \eqref{eq:Aprioriest2} holds.} 

Let $\xi_\epsilon(x)$ be defined as in Lemma \ref{lm:xieps}, i.e., $\xi_\epsilon(x):=\xi(x/\epsilon)\left[1-\xi(x\epsilon/2)\right]$. Step 1 implies that there exists a solution $u_\epsilon$ corresponding to the initial condition $u_\epsilon(0)=\xi_\epsilon (\gamma h-c)+c=\xi_\epsilon \gamma h+(1-\xi_\epsilon )c$ which is bounded. Moreover, $\xi_\epsilon \gamma h+(1-\xi_\epsilon )c\leq\gamma h$ increases as $\epsilon\downarrow 0$  and converges in $H^1_w$ to $\gamma h$. Then the comparison principle from Theorem \ref{th:weakCompPr} implies that the sequence $u_\epsilon$ is increasing as $\epsilon\downarrow 0$. 
Next, the estimate \eqref{eq:Aprioriest2} and Lemma \ref{lm:Est} imply that $u_\epsilon(S,\tau)$ converges to a finite limit $u(S,\tau)$ for any $(S,\tau)\in \left(0,+\infty\right)\times[0,T]$. What is more, $\dot{u}_\epsilon$ is weakly convergent to $\dot{u}(S,\tau)$ in $L^2(0,T;L^2_w)$, $u_\epsilon$ is weakly-$*$ convergent to $u$ in $L^\infty(0,T,H^1_w)$ and $u$ satisfies the estimate \eqref{eq:Aprioriest2}. Then it is sufficient to prove that $\mathcal{F}\left[\tau;u_\epsilon,\xi_\epsilon \gamma h+(1-\xi_\epsilon )c\right]$ is weakly convergent to $\mathcal{F}\left[\tau;u,\gamma h\right]$ in $L^2(0,T;H^*_w)$. First, note that 
\[\mathcal{F}\left[\tau;u_\epsilon,\xi_\epsilon \gamma h+(1-\xi_\epsilon )c\right]=\dot{u}_\epsilon-\frac{1}{2}\sigma^2S^2u^{\prime\prime}_{\epsilon,SS}
\]
is bounded in $L^2(0,T;H^*_w)$ and then there exists an element $\tilde{\mathcal{F}}\in L^2(0,T;H^*_w)$ such that
\[\mathcal{F}\left[\tau;u_\epsilon,\xi_\epsilon \gamma h+(1-\xi_\epsilon )c\right]\underset{L^2(0,T;H^*_w)}{\rightharpoonup}\tilde{\mathcal{F}}.
\]
 On the other hand, $\mathcal{F}\left[\tau;u_\epsilon,\xi_\epsilon \gamma h+(1-\xi_\epsilon )c\right]$ is bounded from above by the constant function $\kappa$. Let $v\in L^2(0,T;H^1_w)$ be some arbitrary nonnegative function. Then Fatou's lemma implies
\begin{align}
	\left\langle \kappa-\tilde{\mathcal{F}},v\right\rangle&=\lim_{\epsilon\rightarrow0}\left(\kappa-\mathcal{F}\left[\cdot;u_\epsilon,\xi_\epsilon \gamma h+(1-\xi_\epsilon )c\right],v\right)_{L^2(0,T;L^2_w)}\nonumber\\
	&\geq \left\langle \kappa-\mathcal{F}\left[\cdot;u,\gamma h\right],v\right\rangle\geq 0,
\end{align}
i.e. 
\[\mathcal{F}\left[\cdot;u,\gamma h\right]\in L^2(0,T;H^*_w) \text{ and } \mathcal{F}\left[\cdot;u,\gamma h\right]\geq \tilde{\mathcal{F}}. \] 

Finally, we prove that in fact 
\begin{equation}
\mathcal{F}\left[\cdot;u,\gamma h\right]\equiv\tilde{\mathcal{F}},\text{ i.e., }\left\langle \mathcal{F}\left[\cdot;u,\gamma h\right],v\right\rangle=\left\langle \tilde{\mathcal{F}},v\right\rangle \quad \forall v\in L^2(0,T;H^1_w).
\label{eq:FequivFtilde}
\end{equation}
First, observe that, $v_\epsilon:=\xi_\epsilon v\rightarrow v$ as $\epsilon\rightarrow0$ in $L^2(0,T;H^1_w)$. Hence, it is sufficient to prove \eqref{eq:FequivFtilde} for functions $v$  vanishing outside a set of the form $I\times[0,T]$ where $I\subset(0,+\infty)$ is some closed interval. According to estimate \eqref{eq:Aprioriest2} and Lemma \ref{lm:Est} (applied to the interval $I$) the functions $u_\epsilon$ and $u$ are uniformly bounded on $I\times[0,T]$. Then 
\[\left\langle \mathcal{F}\left[\cdot;u,\gamma h\right],v\right\rangle=\lim_{\epsilon\rightarrow0}\left(\mathcal{F}\left[\cdot;u_\epsilon,\xi_\epsilon \gamma h+(1-\xi_\epsilon )c\right],v\right)_{L^2(0,T;L^2_w)}=\left\langle \tilde{\mathcal{F}},v\right\rangle.
\]

\noindent \textbf{Step 3.} \textit{Let $h\in H^1_w$. Then there exists a weak solution $u$ to the initial value problem \eqref{eq:psintdiff}. In addition, the inequality \eqref{eq:Aprioriest2} holds.} 

Consider a sequence  of problems with initial condition \[u_N(S,0)=\max \left\{\gamma h(S), -N\right\}, \quad N=1,2,\ldots.\] Then the corresponding solutions $u_N$ form a decreasing sequence due to the comparison principle and Lemma \ref{lm:Est}. Moreover, the pointwise limit $\lim_{N\rightarrow\infty} u_N(S,\tau)$  is finite for any $(S,\tau)$ since the inequality \eqref{eq:Aprioriest2} holds for each function $u_N$. Then the proof follows similar arguments as in Step 2. 
\end{proof}
Finally, note that the uniqueness of the weak solution is a consequence of the comparison principle. More precisely, we have
the following corollary.
\begin{corollary} Assume that $h\in H^1_w$. Then there exists a unique weak solution $u\in W(0,T)\cap L^\infty(0,T,H^1_w)$ to 
 the initial value problem \eqref{eq:psintdiff}. Moreover, the estimate \eqref{eq:Aprioriest2} holds with a constant $C>0$ independent of $u$.
\end{corollary}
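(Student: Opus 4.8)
The existence half is already in hand: Theorem~\ref{th:Exist1} produces a weak solution $u$ satisfying the a~priori bound \eqref{eq:Aprioriest2}, and that bound forces $u\in L^\infty(0,T;H^1_w)$ with $\dot u\in L^2(0,T;L^2_w)\subset L^2(0,T;H^*_w)$; since $[0,T]$ has finite measure we also get $u\in L^2(0,T;H^1_w)$, so $u\in W(0,T)\cap L^\infty(0,T,H^1_w)$. Thus the only genuinely new point is uniqueness (after which the estimate \eqref{eq:Aprioriest2} is automatically attained by the unique solution). The plan is to reduce uniqueness to the weak comparison principle, Theorem~\ref{th:weakCompPr}.

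First I would record that any weak solution lying in $W(0,T)\cap L^\infty(0,T,H^1_w)$ automatically meets the growth requirement \eqref{eq:subsupgrcnd}. Indeed, Lemma~\ref{lm:Est} gives, for a.a.\ $t$,
\[
|u(S,t)|^2\le c_0\|u(t)\|_1^2\,\tfrac1S\exp(C|\ln S|)\le c_0\|u\|_{L^\infty(0,T;H^1_w)}^2\,\tfrac1S\exp(C|\ln S|),
\]
and, writing $L=\ln S$, one has $\tfrac1S\exp(C|\ln S|)=\exp(-L+C|L|)\le A_0\exp(\alpha L^2)$ for suitable constants $A_0,\alpha>0$ (any $\alpha>0$ works, since $-L+C|L|$ is bounded near $L=0$ and is dominated by $\alpha L^2$ for $|L|$ large). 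The same elementary estimate applied to $\gamma h\in H^1_w$ yields $|\gamma h|\le A\exp(\alpha\ln^2S)$. Hence every such weak solution, together with its initial datum, satisfies the hypotheses needed to invoke Theorem~\ref{th:weakCompPr}.

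Uniqueness then follows as usual: if $u_1,u_2\in W(0,T)\cap L^\infty(0,T,H^1_w)$ are weak solutions with the same initial datum $\gamma h$, each of them is simultaneously a weak subsolution and a weak supersolution of \eqref{eq:psintdiff} with initial data $h$ (the equality \eqref{eq:WeakSoldef} gives \eqref{eq:SupSoldef} in both directions for nonnegative $v$, and $u_i(0)=\gamma h$), and by the previous paragraph $u_1,u_2,\gamma h$ all satisfy \eqref{eq:subsupgrcnd}. Applying Theorem~\ref{th:weakCompPr} with $\overline u=u_1$, $\underline u=u_2$ and $\overline h=\underline h=h$ gives $u_2\le u_1$ a.e.; exchanging the roles of $u_1$ and $u_2$ gives the reverse inequality, so $u_1=u_2$ a.e. Finally, the estimate \eqref{eq:Aprioriest2} with a constant independent of $u$ is precisely the bound of Theorem~\ref{th:Exist1}, which is therefore inherited by the (now unique) weak solution.

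The only real obstacle is the elementary conversion in the second paragraph, namely verifying that the pointwise control coming from membership in $H^1_w$ (which degenerates like a power of $S$ as $S\to0$ and $S\to\infty$) does sit inside the admissible class $A\exp(\alpha\ln^2S)$ of the comparison principle; everything else is bookkeeping. One should also note that the constant $C$ entering here is the fixed structural constant from \eqref{eq:wassump}, so there is no circularity in the choice of $\alpha$.
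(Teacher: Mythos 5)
Your proposal is correct and follows essentially the route the paper intends: the paper's own justification is the one-line remark that uniqueness is a consequence of the comparison principle (Theorem~\ref{th:weakCompPr}), with existence and the bound \eqref{eq:Aprioriest2} supplied by Theorem~\ref{th:Exist1}. Your explicit verification, via Lemma~\ref{lm:Est} and the $L^\infty(0,T;H^1_w)$ bound, that any such weak solution and $\gamma h$ satisfy the growth condition \eqref{eq:subsupgrcnd} is exactly the detail the paper leaves implicit, and it is carried out correctly.
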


{\bf{  Acknowledgement  }}

The research is supported by the European Union under Grant Agreement number 304617 (FP7 Marie Curie Action Project
Multi-INT STRIKE - Novel Methods in Computational Finance). The second author is also supported by Bulgarian National Fund
of Science under Project I02/20-2014.

\end{document}